\theoremstyle{definition}
\newtheorem{notation}{Notation}
\newtheorem{definition}{Definition}
\newtheorem{theorem}{Theorem}
\newtheorem{remark}{Remark}
\newtheorem{example}{Example}
\DeclareFontFamily{OT1}{pzc}{}
\DeclareFontShape{OT1}{pzc}{m}{it}{<-> s * [1.10] pzcmi7t}{}
\DeclareMathAlphabet{\mathpzc}{OT1}{pzc}{m}{it}
\tikzstyle{arrow-label}=[above=-1,fill=none,scale=0.8]
\tikzstyle{my-vertex-label}=[scale=0.8]
\tikzstyle{quiver-vertex}=[minimum size=1pt, inner sep=1pt, opacity=1, color=black]
\tikzstyle{category-vertex}=[minimum size=1pt, inner sep=1pt, opacity=1, circle, shading=ball, ball color=black!80!white]
\tikzstyle{rule-vertex}=[minimum size=0pt, inner sep=0pt]
\tikzstyle{rule-arrow}=[-{Implies},double]
\tikzstyle{dead-organism}=[densely dotted,-{Stealth[scale=0.8]},line width=0.65, opacity=0.45]
\tikzstyle{alive-organism}=[-{Stealth[scale=0.8]},line width=0.65]
\tikzstyle{hide-organism}=[draw=none]
\tikzstyle{composite-organism}=[dotted,-{Stealth[scale=0.8]},line width=0.65]
\tikzstyle{hide-label}=[opacity=0]
\tikzstyle{alive-label}=[above=-1,fill=none,scale=0.8]
\begin{document}

\newcommand{\getvertex}[6]
{
	\ifthenelse{\equal{#1}{show}}{\Vertex[L=\hbox{#2}, style={my-vertex-label}, Lpos={#3}, x=#4, y=#5]{#6}}
	{\Vertex[L=\hbox{#2}, style={my-vertex-label}, Lpos={#3}, x=#4, y=#5, empty=true]{#6}}
	
}

\newcommand{\machineruleunary}[4]
{
	\GraphInit[vstyle=Classic]
	\tikzset{VertexStyle/.append style=quiver-vertex}
	
	\Vertex[L=\hbox{}, x=#1, y=#2]{b00}
	\Vertex[L=\hbox{}, x=#1+1, y=#2]{b01}
	\Edge[style=#3-organism](b00)(b01)
	
	\draw[rule-arrow] (#1+0.5,#2-0.1)--(#1+0.5,#2-0.5);
	
	\Vertex[L=\hbox{}, x=#1, y=#2-0.6]{a00}
	\Vertex[L=\hbox{}, x=#1+1, y=#2-0.6]{a01}
	\Edge[labelstyle=below,style=#4-organism](a00)(a01)
	
	\draw (#1-0.2,#2-0.8) rectangle (#1+1.2,#2+0.2);
}

\newcommand{\machinerulebinary}[6]
{
	\GraphInit[vstyle=Classic]
	\tikzset{VertexStyle/.append style=quiver-vertex}
	
	\Vertex[L=\hbox{}, x=#1, y=#2]{b00}
	\Vertex[L=\hbox{}, x=#1+1, y=#2]{b01}
	\Vertex[L=\hbox{}, x=#1+2, y=#2]{b02}
	\Edge[style=#3-organism](b00)(b01)
	\Edge[style=#4-organism](b01)(b02)

	\ifthenelse{\equal{#6}{right}}{  		
  		\draw[rule-arrow] (#1+1.5,#2-0.1)--(#1+1.5,#2-0.5);
  		\Vertex[L=\hbox{}, x=#1+1, y=#2-0.6]{a00}
		\Vertex[L=\hbox{}, x=#1+2, y=#2-0.6]{a01}
		\Edge[labelstyle=below,style=#5-organism](a00)(a01)
	}{
		\draw[rule-arrow] (#1+0.5,#2-0.1)--(#1+0.5,#2-0.5);
		\Vertex[L=\hbox{}, x=#1, y=#2-0.6]{a00}
		\Vertex[L=\hbox{}, x=#1+1, y=#2-0.6]{a01}
		\Edge[labelstyle=below,style=#5-organism](a00)(a01)  		
	}
	\draw (#1-0.2,#2-0.8) rectangle (#1+2.2,#2+0.2);
}

\newcommand{\machineruleternary}[6]
{
	\GraphInit[vstyle=Classic]
	\tikzset{VertexStyle/.append style=quiver-vertex}
	
	\Vertex[L=\hbox{}, x=#1, y=#2]{b00}
	\Vertex[L=\hbox{}, x=#1+1, y=#2]{b01}
	\Vertex[L=\hbox{}, x=#1+2, y=#2]{b02}
	\Vertex[L=\hbox{}, x=#1+3, y=#2]{b03}
	\Edge[style=#3-organism](b00)(b01)
	\Edge[style=#4-organism](b01)(b02)
	\Edge[style=#5-organism](b02)(b03)
	
	\draw[rule-arrow] (#1+1.5,#2-0.1)--(#1+1.5,#2-0.5);
	
	\Vertex[L=\hbox{}, x=#1+1, y=#2-0.6]{a00}
	\Vertex[L=\hbox{}, x=#1+2, y=#2-0.6]{a01}
	\Edge[labelstyle=below,style=#6-organism](a00)(a01)
	
	\draw (#1-0.2,#2-0.8) rectangle (#1+3.2,#2+0.2);
}

\newcommand\quiverx[6]{%
    \def\xa{#1}%
    \def\xb{#2}%
    \def\xc{#3}%
    \def\xd{#4}%
    \def\xe{#5}%
    \def\xf{#6}%
}
\newcommand\quivery[1]{%
    \def\ya{#1}%
}
\newcommand\quiverz[3]{%
    \def\za{#1}%
    \def\zb{#2}%
    \def\zc{#3}%
}
\newcommand{\machineexample}[4]
{
	\GraphInit[vstyle=Classic]
	\tikzset{VertexStyle/.append style=quiver-vertex}	
	
	\node[text width=6.5cm] at (#1+2.85,#2+0.9) {#4};
	\ifthenelse{#3 > -1}{
		\node[text width=1cm] at (#1-0.9,#2-1) {t=#3};
	}{}		

	\Vertex[L=\hbox{$x_1$}, style={my-vertex-label}, Lpos={left=.04cm}, x=#1, y=#2]{x0}
	\Vertex[L=\hbox{$x_2$}, style={my-vertex-label}, Lpos={below=.04cm}, x=#1+1, y=#2-0.2]{x1}
	\Vertex[L=\hbox{$x_3$}, style={my-vertex-label}, Lpos={below=.04cm}, x=#1+2, y=#2]{x2}
	\Vertex[L=\hbox{$x_4$}, style={my-vertex-label}, Lpos={below=.04cm}, x=#1+3, y=#2]{x3}
	\Vertex[L=\hbox{$x_5$}, style={my-vertex-label}, Lpos={below=.04cm}, x=#1+4, y=#2]{x4}
	\Vertex[L=\hbox{$x_6$}, style={my-vertex-label}, Lpos={below=.04cm}, x=#1+5, y=#2-0.2]{x5}
	\Vertex[L=\hbox{$x_7$}, style={my-vertex-label}, Lpos={right=.04cm}, x=#1+6, y=#2]{x6}
	\tikzset{EdgeStyle/.style=\xa-organism}\Edge[labelstyle={arrow-label},label=$\alpha_1$](x0)(x1)
	\tikzset{EdgeStyle/.style=\xb-organism}\Edge[labelstyle={arrow-label},label=$\alpha_2$](x1)(x2)
	\tikzset{EdgeStyle/.style=\xc-organism}\Edge[labelstyle={arrow-label},label=$\alpha_3$](x2)(x3)
	\tikzset{EdgeStyle/.style=\xd-organism}\Edge[labelstyle={arrow-label},label=$\alpha_4$](x3)(x4)
	\tikzset{EdgeStyle/.style=\xe-organism}\Edge[labelstyle={arrow-label},label=$\alpha_5$](x4)(x5)
	\tikzset{EdgeStyle/.style=\xf-organism}\Edge[labelstyle={arrow-label},label=$\alpha_6$](x5)(x6)
	
	\Vertex[L=\hbox{$x_8$}, style={my-vertex-label}, Lpos={left=.04cm}, x=#1+2.5, y=#2-1.6]{y0}
	\Vertex[L=\hbox{$x_9$}, style={my-vertex-label}, Lpos={right=.04cm}, x=#1+3.5, y=#2-1.6]{y1}
	\tikzset{EdgeStyle/.style=\ya-organism}\Edge[labelstyle={arrow-label},label=$\alpha_7$](y0)(y1)
	
	\Vertex[L=\hbox{$x_{10}$}, style={my-vertex-label}, Lpos={left=.04cm}, x=#1+1.5, y=#2-2.4]{z0}
	\Vertex[L={$x_{11}$}, style={my-vertex-label}, Lpos={below=.04cm}, x=#1+2.5, y=#2-2.4]{z1}
	\Vertex[L=\hbox{$x_{12}$}, style={my-vertex-label}, Lpos={below=.04cm}, x=#1+3.5, y=#2-2.4]{z2}
	\Vertex[L=\hbox{$x_{13}$}, style={my-vertex-label}, Lpos={right=.04cm}, x=#1+4.5, y=#2-2.4]{z3}
	\tikzset{EdgeStyle/.style=\za-organism}\Edge[labelstyle={arrow-label},label=$\alpha_8$](z0)(z1)
	\tikzset{EdgeStyle/.style=\zb-organism}\Edge[labelstyle={arrow-label},label=$\alpha_9$](z1)(z2)
	\tikzset{EdgeStyle/.style=\zc-organism}\Edge[labelstyle={arrow-label},label=$\alpha_{10}$](z2)(z3)
		
	\draw (#1-0.5,#2-3.2) rectangle (#1+6.5,#2+1.2);
}

\newcommand{\spaceexample}[3]
{
	\GraphInit[vstyle=Classic]
	\tikzset{VertexStyle/.append style=category-vertex}
	
	\ifthenelse{#3 > -1}{\node[text width=3cm] at (#1+1.15,#2+0.9){$(R \circ T \circ P)(\overrightarrow{Q}^#3)$};}{}

	\ifthenelse{\equal{\xa}{alive}}{\getvertex{show}{$d_1$}{left=.04cm}{#1}{#2}{x0}}{\getvertex{hide}{$d_1$}{left=.04cm}{#1}{#2}{x0}}
	\ifthenelse{\equal{\xa}{alive} \OR \equal{\xb}{alive}}{\getvertex{show}{$d_2$}{below=0.04cm}{#1+1}{#2-0.2}{x1}}{\getvertex{hide}{$d_2$}{below=0.04cm}{#1+1}{#2-0.2}{x1}}
	\ifthenelse{\equal{\xb}{alive} \OR \equal{\xc}{alive}}{\getvertex{show}{$d_3$}{below=.04cm}{#1+2}{#2}{x2}}{\getvertex{hide}{$d_3$}{below=.04cm}{#1+2}{#2}{x2}}
	\ifthenelse{\equal{\xc}{alive} \OR \equal{\xd}{alive}}{\getvertex{show}{$d_4$}{below=.04cm}{#1+3}{#2}{x3}}{\getvertex{hide}{$d_4$}{below=.04cm}{#1+3}{#2}{x3}}
	\ifthenelse{\equal{\xd}{alive} \OR \equal{\xe}{alive}}{\getvertex{show}{$d_5$}{below=.04cm}{#1+4}{#2}{x4}}{\getvertex{hide}{$d_5$}{below=.04cm}{#1+4}{#2}{x4}}
	\ifthenelse{\equal{\xe}{alive} \OR \equal{\xf}{alive}}{\getvertex{show}{$d_6$}{below=.04cm}{#1+5}{#2-0.2}{x5}}{\getvertex{hide}{$d_6$}{below=.04cm}{#1+5}{#2-0.2}{x5}}
	\ifthenelse{\equal{\xf}{alive}}{\getvertex{show}{$d_7$}{right=.04cm}{#1+6}{#2}{x6}}{\getvertex{hide}{$d_7$}{right=.04cm}{#1+6}{#2}{x6}}
		
	\tikzset{EdgeStyle/.style=\xa-organism}\Edge[label=$f_1$,labelstyle=\xa-label](x0)(x1)
	\tikzset{EdgeStyle/.style=\xb-organism}\Edge[label=$f_2$,labelstyle=\xb-label](x1)(x2)
	\tikzset{EdgeStyle/.style=\xc-organism}\Edge[label=$f_3$,labelstyle=\xc-label](x2)(x3)
	\tikzset{EdgeStyle/.style=\xd-organism}\Edge[label=$f_4$,labelstyle=\xd-label](x3)(x4)
	\tikzset{EdgeStyle/.style=\xe-organism}\Edge[label=$f_5$,labelstyle=\xe-label](x4)(x5)
	\tikzset{EdgeStyle/.style=\xf-organism}\Edge[label=$f_6$,labelstyle=\xf-label](x5)(x6)
		
	\ifthenelse{\equal{\xa}{alive} \AND \equal{\xb}{alive}}{
		\tikzset{EdgeStyle/.style=composite-organism}\Edge[style={bend right=50}, labelstyle=below](x0)(x2)	
		\ifthenelse{\equal{\xc}{alive}}{
			\tikzset{EdgeStyle/.style=composite-organism}\Edge[style={bend left=50}, labelstyle=below](x0)(x3)
			\ifthenelse{\equal{\xd}{alive}}{
				\tikzset{EdgeStyle/.style=composite-organism}\Edge[style={bend right=50}, labelstyle=below](x0)(x4)
				\ifthenelse{\equal{\xe}{alive}}{
					\tikzset{EdgeStyle/.style=composite-organism}\Edge[style={bend left=50}, labelstyle=below](x0)(x5)
					\ifthenelse{\equal{\xf}{alive}}{
						\tikzset{EdgeStyle/.style=composite-organism}\Edge[style={bend right=50}, labelstyle=below](x0)(x6)
					}{}
				}{}
			}{}
		}{}		
	}{}
	
	\ifthenelse{\equal{\xb}{alive} \AND \equal{\xc}{alive}}{
		\tikzset{EdgeStyle/.style=composite-organism}\Edge[style={bend right=50}, labelstyle=below](x1)(x3)	
		\ifthenelse{\equal{\xd}{alive}}{
			\tikzset{EdgeStyle/.style=composite-organism}\Edge[style={bend left=50}, labelstyle=below](x1)(x4)
			\ifthenelse{\equal{\xe}{alive}}{
				\tikzset{EdgeStyle/.style=composite-organism}\Edge[style={bend right=50}, labelstyle=below](x1)(x5)
				\ifthenelse{\equal{\xf}{alive}}{
					\tikzset{EdgeStyle/.style=composite-organism}\Edge[style={bend left=50}, labelstyle=below](x1)(x6)
				}{}
			}{}
		}{}		
	}{}
	
	\ifthenelse{\equal{\xc}{alive} \AND \equal{\xd}{alive}}{
		\tikzset{EdgeStyle/.style=composite-organism}\Edge[style={bend right=50}, labelstyle=below](x2)(x4)	
		\ifthenelse{\equal{\xe}{alive}}{
			\tikzset{EdgeStyle/.style=composite-organism}\Edge[style={bend left=50}, labelstyle=below](x2)(x5)
			\ifthenelse{\equal{\xf}{alive}}{
				\tikzset{EdgeStyle/.style=composite-organism}\Edge[style={bend right=50}, labelstyle=below](x2)(x6)
			}{}
		}{}		
	}{}
	
	\ifthenelse{\equal{\xd}{alive} \AND \equal{\xe}{alive}}{
		\tikzset{EdgeStyle/.style=composite-organism}\Edge[style={bend right=50}, labelstyle=below](x3)(x5)	
		\ifthenelse{\equal{\xf}{alive}}{
			\tikzset{EdgeStyle/.style=composite-organism}\Edge[style={bend left=50}, labelstyle=below](x3)(x6)
		}{}		
	}{}	

	\ifthenelse{\equal{\xe}{alive} \AND \equal{\xf}{alive}}{
		\tikzset{EdgeStyle/.style=composite-organism}\Edge[style={bend right=50}, labelstyle=below](x4)(x6)
	}{}
		
	\ifthenelse{\equal{\ya}{alive}}{
		\getvertex{show}{$d_8$}{left=0.04cm}{#1+2.5}{#2-1.6}{y0}
		\getvertex{show}{$d_9$}{right=0.04cm}{#1+3.5}{#2-1.6}{y1}
		\tikzset{EdgeStyle/.style=\ya-organism}\Edge[label=$f_7$,labelstyle=\ya-label](y0)(y1)
	}{}
	
	\ifthenelse{\equal{\za}{alive}}{\getvertex{show}{$d_{10}$}{left=0.04cm}{#1+1.5}{#2-2.4}{z0}}{\getvertex{hide}{$d_{10}$}{left=0.04cm}{#1+1.5}{#2-2.4}{z0}}
	\ifthenelse{\equal{\za}{alive} \OR \equal{\zb}{alive}}{\getvertex{show}{$d_{11}$}{below=0.04cm}{#1+2.5}{#2-2.4}{z1}}{\getvertex{hide}{$d_{11}$}{below=0.04cm}{#1+2.5}{#2-2.4}{z1}}
	\ifthenelse{\equal{\zb}{alive} \OR \equal{\zc}{alive}}{\getvertex{show}{$d_{12}$}{below=0.04cm}{#1+3.5}{#2-2.4}{z2}}{\getvertex{hide}{$d_{12}$}{below=0.04cm}{#1+3.5}{#2-2.4}{z2}}
	\ifthenelse{\equal{\zc}{alive}}{\getvertex{show}{$d_{13}$}{right=0.04cm}{#1+4.5}{#2-2.4}{z3}}{\getvertex{hide}{$d_{13}$}{right=0.04cm}{#1+4.5}{#2-2.4}{z3}}
	
	\tikzset{EdgeStyle/.style=\za-organism}\Edge[label=$f_8$,labelstyle=\za-label](z0)(z1)
	\tikzset{EdgeStyle/.style=\zb-organism}\Edge[label=$f_9$,labelstyle=\zb-label](z1)(z2)
	\tikzset{EdgeStyle/.style=\zc-organism}\Edge[label=$f_{10}$,labelstyle=\zc-label](z2)(z3)
	
	\ifthenelse{\equal{\za}{alive} \AND \equal{\zb}{alive}}{
		\tikzset{EdgeStyle/.style=composite-organism}\Edge[style={bend right=50}, labelstyle=below](z0)(z2)
		\ifthenelse{\equal{\zc}{alive}}{
			\tikzset{EdgeStyle/.style=composite-organism}\Edge[style={bend right=50}, labelstyle=below](z0)(z3)
		}{}
	}{}
	
	\ifthenelse{\equal{\zb}{alive} \AND \equal{\zc}{alive}}{
		\tikzset{EdgeStyle/.style=composite-organism}\Edge[style={bend left=50}, labelstyle=below](z1)(z3)		
	}{}
		
	\draw (#1-0.5,#2-3.2) rectangle (#1+6.5,#2+1.2);
}

\bstctlcite{IEEEexample:BSTcontrol}

\title{Composition Machines: Programming Self-Organising Software Models for the Emergence of Sequential Program Spaces}

\author{\IEEEauthorblockN{Damian Arellanes}
\IEEEauthorblockA{School of Computing and Communications \\
Lancaster University\\
Lancaster LA1 4WA, United Kingdom \\
\{damian.arellanes@lancaster.ac.uk\}}
}

\maketitle

\begin{abstract}
We are entering a new era in which software systems are becoming more and more complex and larger. So, the composition of such systems is becoming infeasible by manual means. To address this challenge, self-organising software models represent a promising direction since they allow the (bottom-up) emergence of complex computational structures from simple rules. In this paper, we propose an abstract machine, called the composition machine, which allows the definition and the execution of such models. Unlike typical abstract machines, our proposal does not compute individual programs but enables the emergence of multiple programs at once. We particularly present the machine's semantics and provide examples to demonstrate its operation with well-known rules from the realm of Boolean logic and elementary cellular automata. 
\end{abstract}

\begin{IEEEkeywords}
Self-Composition, Self-Organising Software Models, Self-Organisation, Emergent Software, Applied Category Theory
\end{IEEEkeywords}

\section{Introduction}
\label{sec:Introduction}

Building software that builds software is a long-standing challenge which has been considered the holy grail of Computer Science since the inception of Artificial Intelligence~\cite{gulwani_program_2017}. As software systems are becoming increasingly large and complex (e.g., Cyber-Physical Systems~\cite{arellanes_evaluating_2020}), this challenge needs to be seriously considered for avoiding the infeasibility of manual composition of complex software at scale. 

As a first step towards this challenge, we have introduced the notion of \emph{composition by self-organisation} in which complex software models are not explicitly programmed, but emerge from simple rules in a decentralised manner~\cite{arellanes_self-organizing_2021}. In this paper, we propose a discrete-time abstract machine that fits into such a paradigm, called the \emph{composition machine}, for the definition and the execution of \emph{self-organising software models}. In this context, a software model is not a single program performing a specific functionality, but a space of programs evolving through well-defined, deterministic self-organisation rules. Thus, the goal of the composition machine is not the computation of a fixed functionality, but the time-dependent emergence of multiple programs at once. Multiple programs are encapsulated in spaces which are self-similar computational structures rooted in Category Theory. We adopt this formalism for leveraging the large corpus of existing theorems in categorical settings, particularly those related to composition properties.

The rest of the paper is structured as follows. Section~\ref{sec:preliminaries} introduces preliminaries for the formal definition of the proposed machine. Section~\ref{sec:program-spaces} presents the semantics of composition and program spaces. Section~\ref{sec:composition-machines} describes the semantics of the composition machine. Section~\ref{sec:examples} presents examples in which complex program spaces emerge from simple rules. Finally, Section~\ref{sec:conclusions} presents the future directions and the final remarks.

\section{Preliminaries} \label{sec:preliminaries}

This section presents preliminaries about \emph{Category Theory} and \emph{Quiver Theory} for the formal specification of the proposed machine. Our intention is not to provide a deeper analysis, but just to present the most important definitions for our purposes. For a deeper treatment of categorical foundations and quivers, we refer the reader to~\cite{pierce_basic_1991,awodey_category_2010,derksen_introduction_2017}. 

\subsection{Category Theory}

A category consists of a collection of objects, a collection of morphisms between objects, an identity morphism for each object and a way to compose morphisms. There are two laws that composition must satisfy: \emph{associativity} and \emph{unity}. Associativity yields the same composite no matter how composition operands are grouped, and unity leaves unchanged a non-identity morphism when it is composed with an identity morphism. To better understand the semantics of a category, let us describe an example:

\begin{definition}[Category of Sets]
$\mathscr{S}$ is the category where:
\begin{itemize}
\item $\mathscr{S}_0$ is a collection of sets, called objects,
\item $\mathscr{S}_1$ is a collection of functions between sets, called morphisms,
\item $dom: \mathscr{S}_1 \rightarrow \mathscr{S}_0$ is a collection morphism that takes each function $f \in \mathscr{S}_1$ to its domain $dom(f) \in \mathscr{S}_0$,
\item $cod: \mathscr{S}_1 \rightarrow \mathscr{S}_0$ is a collection morphism that takes each function $f \in \mathscr{S}_1$ to its codomain $cod(f) \in \mathscr{S}_0$,
\item there is an identity function $1_X:X \rightarrow X$ for each set $X \in \mathscr{S}_0$ such that $1_X \in \mathscr{S}_1$,
\item there is a composite function $g \circ f \in \mathscr{S}_1$ for every pair $(f,g) \in \mathscr{S}_1 \times \mathscr{S}_1$ satisfying $cod(f)=dom(g)$,
\item \emph{composition is associative}: $\forall (f,g,h) \in \mathscr{S}_1 \times \mathscr{S}_1 \times \mathscr{S}_1, (h \circ g) \circ f = h \circ (g \circ f) \iff cod(f)=dom(g)~\land~cod(g)=dom(h)$, and
\item \emph{composition satisfies unity}: $\forall (f:X \rightarrow Y) \in \mathscr{S}_1, 1_Y \circ f = f = f \circ 1_X$. 
\end{itemize}
\end{definition}

\begin{remark}
We say that a category is \emph{small} if and only if it is internal in $\mathscr{S}$, i.e., the collection of objects and the collection of morphisms are sets. 
\end{remark}

To avoid set-theoretic issues related to the Russell-Zermelo paradox~\cite{russell_principles_1903}, Category Theory defines a category of categories, $Cat$, where objects are small categories and morphisms are functors. A functor is a structure-preserving mapping between two categories, which assigns the objects of one category to the objects of another one. It also assigns the morphisms of one category to the morphisms of the other. Functors preserve both identity morphisms and composition of morphisms.  

\begin{definition}[Functor]
A functor $F:\mathscr{D} \rightarrow \mathscr{E}$ assigns each object $a \in \mathscr{D}_0$ to an object $F(a) \in \mathscr{E}_0$ and each morphism $f \in \mathscr{D}_1$ to a morphism $F(f) \in \mathscr{E}_1$ such that: 
\begin{itemize}
\item $F(1_a)=1_{F(a)}$ for each object $a \in \mathscr{D}_0$, and
\item $F(g \circ f)=F(g) \circ F(f)$ for all composable morphisms $f,g \in \mathscr{D}_1$.
\end{itemize}
\end{definition}

\begin{remark}
A functor $F:\mathscr{D} \rightarrow \mathscr{E}$ is a presentation of a category $\mathscr{D}$ in a category $\mathscr{E}$.
\end{remark}

\begin{definition} [Natural Transformation]
Let $F,G:\mathscr{D} \rightarrow \mathscr{E}$ be two functors each from the category $\mathscr{D}$ to the category $\mathscr{E}$. A natural transformation $\eta:F \rightarrow G$ is a morphism between $F$ and $G$, which satisfies the following:
\begin{itemize}
\item for every object $a \in \mathscr{D}_0$, there is a morphism $\eta_a: F(a) \rightarrow G(a)$ between objects of $\mathscr{E}$, and
\item for every morphism $(f: a \rightarrow b) \in \mathscr{D}_1$, we have $\eta_b \circ F(f)=G(f) \circ \eta_a$.
\end{itemize}
\end{definition}

\subsection{Quiver Foundations}

The underlying structure of a category is typically represented as a \emph{quiver} which is a directed multigraph with loops allowed. The term quiver is used among category theorists to avoid confusions derived from the multiple meanings of the word \emph{graph}. The formal definition of a quiver is presented below, along with other related formalisms.\footnote{Whenever convenient, we treat a function as a set of relations.}

\begin{notation}[Walking Quiver Category]
Let $\mathscr{Q}$ be the walking quiver category consisting of a collection $\mathscr{Q}_0$ of vertices, a collection $\mathscr{Q}_1$ of arrows, a source morphism $s:\mathscr{Q}_1 \rightarrow \mathscr{Q}_0$ and a target morphism $\tau:\mathscr{Q}_1 \rightarrow \mathscr{Q}_0$. 
\end{notation}

\begin{definition}[Quiver]
A \emph{quiver} $Q$ is a functor $\mathscr{Q} \rightarrow \mathscr{S}$. Intuitively, it is a quadruple $(Q_0,Q_1,s,\tau)$ where $Q_0$ is a set of vertices, $Q_1$ is a set of arrows, $s:Q_1 \rightarrow Q_0$ is a source function and $\tau:Q_1 \rightarrow Q_0$ is a target function. If an arrow $\alpha \in Q_1$ has source vertex $x \in Q_0$ and target vertex $y \in Q_0$, we say that $\alpha$ is directed from $x$ to $y$. This is denoted by $\alpha:x \rightarrow y$ with $x=s(\alpha)$ and $y=\tau(\alpha)$.
We call $Q$ finite if and only if the sets $Q_0$ and $Q_1$ are both finite.
\end{definition}

\begin{definition}[Subquiver]
We say that $Q'=(Q'_0,Q'_1,s',\tau')$ is a \emph{subquiver} of $Q=(Q_0,Q_1,s,\tau)$, written $Q' \subseteq Q$, if and only if $Q'_0 \subseteq Q_0$, $Q'_1 \subseteq Q_1$, $s'=s|_{Q'_1}$ and $\tau'=\tau|_{Q'_1}$. 
\end{definition}

\begin{definition}[Non-Trivial Path]
A \emph{non-trivial path} $\rho=(\alpha_n,\alpha_{n-1},\ldots,\alpha_1)$ in a quiver $Q$ is a finite sequence of arrows such that $n \geq 1$ is the length of $\rho$, $\alpha_i \in Q_1$ for all $i \in \mathbb{N} \cap [1,n]$ and $\tau(\alpha_j)=s(\alpha_{j+1})$ for all $j \in \mathbb{N} \cap [1,n-1]$. The source vertex of $\rho$ is denoted by $s(\rho)$ and the target vertex by $\tau(\rho)$ so that $s(\rho)=s(\alpha_1)$ and $\tau(\rho)=\tau(\alpha_n)$ (clearly an abuse of notation but convenient). By convention, all paths are read from right to left as in function composition, and the same arrow can occur more than once in $\rho$.
\end{definition}

\begin{definition}[Trivial Path]
A \emph{trivial path} $\rho_x$ in a quiver $Q$ is just a vertex $x \in Q_0$. Its length is $0$ and there are as many trivial paths as there are vertices in $Q_0$.
\end{definition}

\begin{definition}[Path Concatenation]
Let $\rho_1$ and $\rho_2$ be two paths. We say that $\rho_2 \ast \rho_1$ is a new path, called a path concatenation, if and only if $\tau(\rho_1)=s(\rho_2)$. Unambiguously, the source of $\rho_2 \ast \rho_1$ is denoted by $s(\rho_2 \ast \rho_1)$ and its target by $\tau(\rho_2 \ast \rho_1)$ such that $s(\rho_2 \ast \rho_1)=s(\rho_1)$ and $\tau(\rho_2 \ast \rho_1)=\tau(\rho_2)$.
\end{definition}

\begin{definition}[Set of Paths]
By convention, $Q_m$ is the set of paths of length $m \geq 0$ in a quiver $Q$, and $Q_*=\bigcup\limits_{m \in \mathbb{N}} Q_{m}$ is the set of all the possible paths in $Q$.
\end{definition}

\begin{theorem}[\cite{derksen_introduction_2017}] \label{theorem:infinite-paths}
Given a quiver $Q$, $|Q_*|$ is finite if and only if $Q$ is finite and has no oriented cycles.
\end{theorem}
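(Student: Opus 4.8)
The statement is a biconditional, so the plan is to prove the two implications separately, treating the forward direction ($|Q_*|$ finite $\implies$ $Q$ finite and without oriented cycles) by contraposition.

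For the contrapositive of the forward direction, I would assume $Q$ is not finite or contains an oriented cycle, and exhibit infinitely many distinct paths. If $Q_0$ is infinite, then since each vertex $x$ is a trivial path $\rho_x$ and distinct vertices give distinct trivial paths, $|Q_*| \geq |Q_0|$ is infinite. If instead $Q_1$ is infinite, the length-$1$ paths $(\alpha)$, $\alpha \in Q_1$, are pairwise distinct elements of $Q_*$. Finally, if $Q$ has an oriented cycle --- that is, a non-trivial path $\rho$ with $s(\rho)=\tau(\rho)$ --- then by the concatenation definition every iterated concatenation $\rho,\ \rho \ast \rho,\ \rho \ast \rho \ast \rho,\ \ldots$ is again a well-defined path, and these are pairwise distinct since the $k$-fold concatenation has length $k \cdot \mathrm{length}(\rho) \geq k$; hence $|Q_*|$ is infinite.

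For the converse I would assume $Q$ is finite with no oriented cycle, and the crux is a uniform bound on path length: every path has length at most $|Q_0|-1$ (and if $Q_0=\emptyset$ there are no arrows and no paths, so $Q_*=\emptyset$ is trivially finite). Indeed, a path $\rho=(\alpha_n,\ldots,\alpha_1)$ of length $n \geq 1$ traverses the vertices $v_0=s(\alpha_1)$, $v_i=\tau(\alpha_i)=s(\alpha_{i+1})$ for $1 \leq i \leq n-1$, and $v_n=\tau(\alpha_n)$, that is $n+1$ vertices counted with multiplicity. If $n \geq |Q_0|$, the pigeonhole principle forces $v_i=v_j$ for some $0 \leq i < j \leq n$, and then $(\alpha_j,\alpha_{j-1},\ldots,\alpha_{i+1})$ is a non-trivial path whose source and target coincide, i.e.\ an oriented cycle, contradicting the hypothesis. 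Hence $Q_m=\emptyset$ for every $m \geq |Q_0|$. For each $0 \leq m \leq |Q_0|-1$, sending a length-$m$ path to its tuple of arrows embeds $Q_m$ into $Q_1^{\,m}$, which is finite because $Q_1$ is finite, and $Q_0$ is finite by hypothesis; therefore $Q_* = \bigcup_{m=0}^{|Q_0|-1} Q_m$ is a finite union of finite sets, hence finite.

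The embeddings of $Q_0$, $Q_1$, and $Q_1^{\,m}$ are routine; the one step that needs genuine care is the pigeonhole argument --- verifying that a repeated vertex in the traversal of a path really does produce an oriented cycle in the precise sense used by the hypothesis, and correctly counting that a length-$n$ path meets $n+1$ vertices. I would also check the degenerate cases $Q_0=\emptyset$ and $n=0$ explicitly so that the length bound holds vacuously or trivially there.
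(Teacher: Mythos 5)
Your proof is correct. Note that the paper does not prove this statement at all --- it imports it by citation from the quiver-representations literature --- and your argument (trivial/length-one paths and iterated concatenations of a cycle for the forward direction via contraposition; a pigeonhole bound of $|Q_0|-1$ on path length plus finiteness of each $Q_m$ for the converse) is precisely the standard textbook proof, with the delicate steps (the repeated-vertex-gives-a-cycle argument and the degenerate cases) handled correctly.
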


\begin{definition}[Path Category]
The \emph{path category} $P(Q)$ on a quiver $Q$ consists of:
\begin{itemize}
\item a collection $P(Q)_0$ whose objects are the vertices in $Q_0$,
\item a collection $P(Q)_1$ whose morphisms are the paths in $Q_*$,
\item a collection morphism $s:P(Q)_1 \rightarrow P(Q)_0$ that maps each path $\rho \in P(Q)_1$ to its source vertex $s(\rho) \in P(Q)_0$,
\item a collection morphism $\tau:P(Q)_1 \rightarrow P(Q)_0$ that maps each path $\rho \in P(Q)_1$ to its target vertex $\tau(\rho) \in P(Q)_0$,
\item a trivial path $1_x \in P(Q)_1$ for every vertex $x \in P(Q)_0$, called the identity of $x$, and
\item a path concatenation $\rho_2 \ast \rho_1 \in P(Q)_1$ for every pair $(\rho_1,\rho_2) \in P(Q)_1 \times P(Q)_1$ satisfying $\tau(\rho_1)=s(\rho_2)$.
\end{itemize}
\end{definition}

\begin{remark}
The category $P(Q)$ is sound because: 
\begin{itemize}
\item \emph{composition is associative}: $\forall (\rho_1,\rho_2,\rho_3) \in P(Q)_1 \times P(Q)_1 \times P(Q)_1, (\rho_3 \ast \rho_2) \ast \rho_1 = \rho_3 \ast (\rho_2 \ast \rho_1) \iff \tau(\rho_1)=s(\rho_2)~\land~\tau(\rho_2)=s(\rho_3)$, and 
\item \emph{composition satisfies unity}: $\forall \rho \in P(Q)_1, 1_{\tau(\rho)} \ast \rho = \rho = \rho \ast 1_{s(\rho)}$.
\end{itemize}
\end{remark}

\begin{remark}
Given a path category $P(Q)$, the composition of two morphisms in $P(Q)_1$ is defined by the concatenation of two paths in $Q_*$, and an identity path $1_x \in P(Q)_1$ corresponds to a trivial path $\rho_x \in Q_0$. Also, every path $\rho \in Q_1$ is mapped to its one morphism in $P(Q)_1$.
\end{remark}

\begin{theorem}[\cite{awodey_category_2010}]\label{theorem:adjoint-functor}
Let $\mathscr{S}^{\mathscr{Q}}$ be the category where objects are functors $Q:\mathscr{Q} \rightarrow \mathscr{S}$ (i.e., quivers) and morphisms are natural transformations between such functors. Then, there is a functor $P: \mathscr{S}^{\mathscr{Q}} \rightarrow Cat$ that sends each quiver $Q$ to its corresponding path category $P(Q)$, by taking each vertex $x \in Q_0$ to an object $P(x) \in P(Q)_0$ and each path $\rho \in Q_*$ to a morphism $P(\rho) \in P(Q)_1$.
\end{theorem}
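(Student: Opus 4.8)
The plan is to verify directly that $P$ carries the data of a functor $\mathscr{S}^{\mathscr{Q}}\to Cat$: it must send each object (a quiver) to a \emph{small} category, each morphism (a natural transformation between quiver-functors) to a functor, and it must respect identities and composition. No single step is computationally heavy; the delicate point is to make the passage from ``natural transformation of quivers'' to ``functor of path categories'' coherent, and in particular to check that the candidate $P(\eta)$ is well defined on paths. Once the data is unpacked, every remaining verification reduces to the observation that path concatenation is just juxtaposition of arrow-sequences and that $P(\eta)$ acts on a path arrow-by-arrow.

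\emph{Action on objects.} Given a quiver $Q=(Q_0,Q_1,s,\tau)$, I take $P(Q)$ to be the path category of the earlier definition. The remark following that definition already records associativity and unity, so $P(Q)$ is a genuine category; what remains is smallness, i.e.\ that $P(Q)$ is internal to $\mathscr{S}$. Its object collection is $P(Q)_0=Q_0$, a set, and its morphism collection is $P(Q)_1=Q_*=\bigcup_{m\in\mathbb{N}}Q_m$; each $Q_m$ is carved out of the $m$-fold product $Q_1^{\,m}$ by the composability constraints (and $Q_0$, the length-$0$ paths, is a set of vertices), so $Q_*$ is a set-indexed union of sets, hence a set. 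Note that Theorem~\ref{theorem:infinite-paths} is not needed here: $Q_*$ is a set whether or not it is finite. Thus $P(Q)\in Cat_0$.

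\emph{Action on morphisms.} A morphism $Q\to Q'$ in $\mathscr{S}^{\mathscr{Q}}$ is a natural transformation $\eta$ between the functors $Q,Q':\mathscr{Q}\to\mathscr{S}$. Since $\mathscr{Q}$ has exactly two objects, the vertex-object $\mathscr{Q}_0$ and the arrow-object $\mathscr{Q}_1$, with the only non-identity morphisms being $s,\tau:\mathscr{Q}_1\to\mathscr{Q}_0$, the datum of $\eta$ is a pair of functions $\eta_0:Q_0\to Q'_0$ and $\eta_1:Q_1\to Q'_1$, and the two naturality squares say exactly $s'\circ\eta_1=\eta_0\circ s$ and $\tau'\circ\eta_1=\eta_0\circ\tau$; i.e.\ $\eta$ is a quiver homomorphism. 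I define $P(\eta):P(Q)\to P(Q')$ by $P(\eta)(x)=\eta_0(x)$ on vertices, $P(\eta)\big((\alpha_n,\dots,\alpha_1)\big)=(\eta_1(\alpha_n),\dots,\eta_1(\alpha_1))$ on non-trivial paths, and $P(\eta)(\rho_x)=\rho_{\eta_0(x)}$ on trivial paths. The only place the naturality identities are used is well-definedness: for consecutive arrows we get $\tau'(\eta_1(\alpha_j))=\eta_0(\tau(\alpha_j))=\eta_0(s(\alpha_{j+1}))=s'(\eta_1(\alpha_{j+1}))$, so the image sequence is indeed a path in $Q'$. Preservation of identities is immediate since $P(\eta)(1_x)=\rho_{\eta_0(x)}=1_{\eta_0(x)}$, and $P(\eta)(\rho_2\ast\rho_1)=P(\eta)(\rho_2)\ast P(\eta)(\rho_1)$ holds because concatenation juxtaposes arrow-sequences (with the two degenerate cases, $\rho_1$ or $\rho_2$ trivial, handled separately and trivially). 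Hence $P(\eta)$ is a functor, i.e.\ a morphism of $Cat$.

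\emph{Functoriality of $P$.} Identities and composition of natural transformations are computed componentwise, so $(1_Q)_0=1_{Q_0}$, $(1_Q)_1=1_{Q_1}$, and $(\eta'\circ\eta)_i=\eta'_i\circ\eta_i$ for $i\in\{0,1\}$. Feeding these into the formulas for $P(-)$ gives $P(1_Q)=1_{P(Q)}$ at once, and $P(\eta'\circ\eta)=P(\eta')\circ P(\eta)$ because applying $\eta'_1\circ\eta_1$ to every arrow of a path is the same as applying $\eta_1$ and then $\eta'_1$ (and likewise on vertices and on trivial paths). This completes the proof that $P$ is a functor. I expect the only real subtlety to be the well-definedness check above; the label anticipates the stronger fact that this $P$ is left adjoint to the forgetful functor $Cat\to\mathscr{S}^{\mathscr{Q}}$ that discards composition, but that goes beyond the present statement and I would not pursue it here.
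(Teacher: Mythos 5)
The paper does not prove this statement at all: it is imported verbatim from the cited reference (Awodey), so there is no in-paper argument to compare against. Your direct verification is correct and is essentially the standard textbook proof of the free-category (path-category) construction: you rightly identify that a morphism of $\mathscr{S}^{\mathscr{Q}}$ unpacks to a quiver homomorphism $(\eta_0,\eta_1)$, that the only genuine point is well-definedness of the arrow-by-arrow action on paths (which is exactly where the naturality squares $s'\circ\eta_1=\eta_0\circ s$ and $\tau'\circ\eta_1=\eta_0\circ\tau$ are used), and that smallness of $P(Q)$ holds because $Q_*$ is a set regardless of finiteness. Your closing remark is also apt: the label \quotes{adjoint functor} refers to the stronger fact that this $P$ is left adjoint to the forgetful functor $Cat \rightarrow \mathscr{S}^{\mathscr{Q}}$, which neither the statement nor your proof needs to establish.
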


\section{Semantics of Composition and Program Spaces} \label{sec:program-spaces}

In this section, we present the semantics of composition and program spaces for the rest of the paper. For this, we introduce the notion of a \emph{computon} which is the fundamental unit of computation in our proposal.\footnote{The word \emph{computon} derives from the Latin root for computation (\emph{computus}) and the Greek suffix \emph{-on}. In Physics, this suffix is traditionally used to designate subatomic particle names.} Informally, a computon is a sequential program that produces exactly one output value for a given input value via some finite computation. A value is an element of some set referred to as data type.\footnote{Examples of data types include the set of integers, the set of real numbers and the set of characters. A detailed overview of data types is out of the scope of this paper. This is because the definition of type systems depends on the application domain.} Formally: 

\begin{definition} [Data Type]
A data type $d$ is a set of values equipped with at least one \emph{k-ary} operation $d^k \rightarrow d$ that can be performed on those values. 
\end{definition}

\begin{notation}
Let $\mathbb{D}$ be the universal set of data types.
\end{notation}

\begin{definition} [Computon]
A \emph{computon} $f:d_1 \rightarrow d_2$ is a function from an input data type $d_1 \in \mathbb{D}$ to an output data type $d_2 \in \mathbb{D}$. We say that it is an identity computon if $d_1=d_2$ and $\forall v \in d_1, f(v)=v$. Otherwise, $f$ is a non-identity computon. 
\end{definition}

\begin{definition} [Composite Computon] \label{def:composite-computon}
A \emph{composite computon} $f_2 \circ f_1$ is a higher-order function given by the composition of computons $f_1$ and $f_2$ where $cod(f_1)=dom(f_2)$.
\end{definition}

\begin{definition}[Computon Category] \label{def:category-computon}
The \emph{computon category} $\mathscr{C}$ consists of:
\begin{itemize}
\item a collection $\mathscr{C}_0$ of data types,
\item a collection $\mathscr{C}_1$ of computons,
\item a morphism $in:\mathscr{C}_1 \rightarrow \mathscr{C}_0$ that takes each computon $f \in \mathscr{C}_1$ to an input data type $d \in \mathscr{C}_0$,
\item a morphism $out:\mathscr{C}_1 \rightarrow \mathscr{C}_0$ that takes each computon $f \in \mathscr{C}_1$ to an output data type $d \in \mathscr{C}_0$,
\item a composite computon $f_2 \circ f_1 \in \mathscr{C}_1$ for every pair $(f_1,f_2) \in \mathscr{C}_1 \times \mathscr{C}_1$ satisfying $out(f_1)=in(f_2)$, and
\item an identity computon $1_d \in \mathscr{C}_1$ for every data type $d \in \mathscr{C}_0$.
\end{itemize}
\end{definition}

\begin{remark}
The category $\mathscr{C}$ is sound because: 
\begin{itemize}
\item \emph{composition of computons is associative}: $\forall (f_3,f_2,f_1) \in \mathscr{C}_1 \times \mathscr{C}_1 \times \mathscr{C}_1, (f_3 \circ f_2) \circ f_1 = f_3 \circ (f_2 \circ f_1) \iff out(f_1)=in(f_2)~\land~out(f_2)=in(f_3)$, and
\item \emph{composition of computons satisfies unity}: $\forall (f:d_1 \rightarrow d_2) \in \mathscr{C}_1$, $1_{d_2} \circ f = f = f \circ 1_{d_1}$.
\end{itemize}
\end{remark}

\begin{definition} [Program Space] \label{def:program-space}
A \emph{program space} is a functor $R: \mathscr{C} \rightarrow \mathscr{S}$ from the computon category to the set category. It takes the collection $\mathscr{C}_0$ to a set $R(\mathscr{C}_0)$ of data types, the collection $\mathscr{C}_1$ to a set $R(\mathscr{C}_1)$ of computons, the morphism $in: \mathscr{C}_1 \rightarrow \mathscr{C}_0$ to a function $R(in): R(\mathscr{C}_1) \rightarrow R(\mathscr{C}_0)$ and the morphism $out: \mathscr{C}_1 \rightarrow \mathscr{C}_0$ to a function $R(out): R(\mathscr{C}_1) \rightarrow R(\mathscr{C}_0)$. 
\end{definition}

Intuitively, a program space $R$ is a quadruple $(R_0,R_1,in,out)$ where $R_0$ is a set of data types, $R_1$ is a set of computons, $in:R_1 \rightarrow R_0$ is an input function and $out:R_1 \rightarrow R_0$ is an output function. 

\begin{definition} [Transformer] \label{def:transformer}
A \emph{transformer} $T: P(Q) \rightarrow \mathscr{C}$ is a functor for presenting the path category on a quiver $Q$ in a computon category $\mathscr{C}$. It takes the collection $P(Q)_0$ of vertices to a collection $\mathscr{C}_0$ of data types, the collection $P(Q)_1$ of paths to a collection $\mathscr{C}_1$ of computons, the source morphism $s$ to an input morphism $in$ and the target morphism $\tau$ to an output morphism $out$ such that: 
\begin{itemize}
\item there is an identity computon $1_{T(x)} \in \mathscr{C}_1$ for each vertex $x \in P(Q)_0$ where $T(x) \in \mathscr{C}_0$, and
\item there is a composite computon $T(\rho_n \ast \cdots \ast \rho_1) \in \mathscr{C}_1$ for every path concatenation $\rho_n \ast \ldots \ast \rho_1 \in P(Q)_1$ such that $T(\rho_n \ast \cdots \ast \rho_1) = T(\rho_n) \circ \cdots \circ T(\rho_1)$ and $T(\rho_1),\ldots,T(\rho_n) \in \mathscr{C}_1$.
\end{itemize}
\end{definition}

\begin{notation}
$(T \circ P)(Q)$ denotes the computon category generated from the path category on a quiver $Q$. Its collection of data types is denoted by $(T \circ P)(Q)_0$ and its collection of computons by $(T \circ P)(Q)_1$.
\end{notation}

\begin{definition} \label{def:composite-functor-space}
From Definitions~\ref{def:program-space} and \ref{def:transformer}, we can specify a composite functor $R \circ T: P(Q) \rightarrow \mathscr{S}$ for transforming the path category on a quiver $Q$ into its corresponding program space $(R \circ T \circ P)(Q)$. For short, we refer $(R \circ T \circ P)(Q)$ to as the program space on $Q$. 
\end{definition}

\begin{example} \label{example:composition-semantics}
This example is a walkthrough for the construction of the program space $(R \circ T \circ P)(Q)$ on the quiver $Q=(Q_0,Q_1,s,\tau)$. This quiver is illustrated in Figure~\ref{fig:example-quiver} and is defined as follows:
\begin{itemize}
\item $Q_0=\{x_1, x_2, \ldots, x_8\}$,
\item $Q_1=\{\alpha_1, \alpha_2, \ldots, \alpha_6\}$,
\item $s=\{(\alpha_1,x_1),(\alpha_2,x_2),(\alpha_3,x_3),(\alpha_4,x_1),(\alpha_5,x_6),\\(\alpha_6,x_7)\}$, and
\item $\tau=\{(\alpha_1,x_2),(\alpha_2,x_3),(\alpha_3,x_4),(\alpha_4,x_5),(\alpha_5,x_7),\\(\alpha_6,x_8)\}$.
\end{itemize}

\begin{figure}[!h]
\centering
\begin{tikzpicture}
	\GraphInit[vstyle=Classic]
	\tikzset{VertexStyle/.append style=quiver-vertex}
	
	\getvertex{show}{$x_1$}{above=.04cm}{0}{4}{d1}
	\getvertex{show}{$x_2$}{above=.04cm}{2}{4}{d2}
	\getvertex{show}{$x_3$}{above=.04cm}{4}{4}{d3}
	\getvertex{show}{$x_4$}{right=.04cm}{6}{4}{d4}
	\getvertex{show}{$x_5$}{left=.04cm}{-1.5}{4}{d5}
	\getvertex{show}{$x_6$}{left=.04cm}{0}{2.3}{d6}
	\getvertex{show}{$x_7$}{below=.04cm}{2}{2.3}{d7}
	\getvertex{show}{$x_8$}{right=.04cm}{4}{2.3}{d8}
	
	\tikzset{EdgeStyle/.style=alive-organism}\Edge[label=$\alpha_1$,labelstyle=alive-label](d1)(d2)
	\tikzset{EdgeStyle/.style=alive-organism}\Edge[label=$\alpha_2$,labelstyle=alive-label](d2)(d3)
	\tikzset{EdgeStyle/.style=alive-organism}\Edge[label=$\alpha_3$,labelstyle=alive-label](d3)(d4)
	\tikzset{EdgeStyle/.style=alive-organism}\Edge[label=$\alpha_4$,labelstyle=alive-label](d1)(d5)
	\tikzset{EdgeStyle/.style=alive-organism}\Edge[label=$\alpha_5$,labelstyle=alive-label](d6)(d7)
	\tikzset{EdgeStyle/.style=alive-organism}\Edge[label=$\alpha_6$,labelstyle=alive-label](d7)(d8)
\end{tikzpicture}
\caption{Graphical representation of the quiver $Q$ described in Example~\ref{example:composition-semantics}.}
\label{fig:example-quiver}
\end{figure}
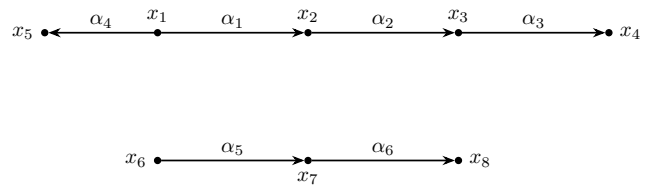

Once the quiver $Q$ has been defined, we can transform it into a computon category as follows. Let $D=\{d_1, d_2, \ldots, d_8\}$ be a set of data types and $F=\{f_1: d_1 \rightarrow d_2, f_2: d_2 \rightarrow d_3, f_3: d_3 \rightarrow d_4, f_4: d_1 \rightarrow d_5, f_5: d_6 \rightarrow d_7, f_6: d_7 \rightarrow d_8\}$ be a set of computons. According to Definition~\ref{def:transformer}, we can specify a transformer $T$ to present $P(Q)$ in a computon category $(T \circ P)(Q)$. We accomplish this by mapping each vertex $P(x_i) \in P(Q)_0$ to a data type $d_i \in D$ for all $i \in \mathbb{N} \cap [1,8]$ and each path $P(\alpha_j) \in P(Q)_1$ of length one to a computon $f_j \in F$ for all $j \in \mathbb{N} \cap [1,6]$. Composite computons correspond to path concatenations, and identity computons are trivial paths. This construction is formally defined as follows:
\begin{itemize}
\item $\forall i \in \mathbb{N} \cap [1,8], x_i \in Q_0 \implies T(P(x_i)) \in (T \circ P)(Q)_0$ such that $P(x_i) \in P(Q)_0$ and $T(P(x_i)) = d_i \in D$,
\item $\forall i \in \mathbb{N} \cap [1,8], x_i \in Q_0 \implies T(1_{P(x_i)}):T(P(x_i)) \rightarrow T(P(x_i))$ such that $1_{P(x_i)} \in P(Q)_1, T(1_{P(x_i)}) \in (T \circ P)(Q)_1, P(x_i) \in P(Q)_0, T(P(x_i)) \in (T \circ P)(Q)_0$ and $T(P(x_i)) = d_i \in D$,
\item $\forall i \in \mathbb{N} \cap [1,6], \alpha_i \in Q_1 \implies T(P(\alpha_i)) \in (T \circ P)(Q)_1$ such that $P(\alpha_i) \in P(Q)_1$ and $T(P(\alpha_i)) = f_i \in F$, and
\item $\forall (\rho_n \ast \cdots \ast \rho_1) \in P(Q)_1, [T(\rho_n) \circ \cdots \circ T(\rho_1)] \in (T \circ P)(Q)_1$.
\end{itemize}

Finally, the resulting computon category $(T \circ P)(Q)$ can be presented in $\mathscr{S}$ via the functor $R$ (see Definition~\ref{def:program-space}). Intuitively, this is the program space $(R \circ T \circ P)(Q)=((R \circ T \circ P)(Q)_0,(R \circ T \circ P)(Q)_1,in,out)$ where:
\begin{itemize}
\item $(R \circ T \circ P)(Q)_0=D$,
\item $(R \circ T \circ P)(Q)_1=F \cup \{f_2 \circ f_1: d_1 \rightarrow d_3, f_3 \circ f_2: d_2 \rightarrow d_4, f_3 \circ f_2 \circ f_1: d_1 \rightarrow d_4, f_6 \circ f_5: d_6 \rightarrow d_8, 1_{d_1}: d_1 \rightarrow d_1, 1_{d_2}: d_2 \rightarrow d_2, \ldots, 1_{d_8}: d_8 \rightarrow d_8\}$,
\item $in=\{(f_1,d_1),(f_2,d_2),(f_3,d_3),(f_4,d_1),(f_5,d_6),\\(f_6,d_7),(f_2 \circ f_1,d_1),(f_3 \circ f_2,d_2),(f_3 \circ f_2 \circ f_1,d_1),\\(f_6 \circ f_5,d_6),(1_{d_1}, d_1),\ldots,(1_{d_8}, d_8)\}$, and
\item $out=\{(f_1,d_2),(f_2,d_3),(f_3,d_4),(f_4,d_5),(f_5,d_7),\\(f_6,d_8),(f_2 \circ f_1,d_3),(f_3 \circ f_2,d_4),(f_3 \circ f_2 \circ f_1,d_4),\\(f_6 \circ f_5,d_8),(1_{d_1}, d_1),\ldots,(1_{d_8}, d_8)\}$.
\end{itemize}

The program space $(R \circ T \circ P)(Q)$ is illustrated in Figure~\ref{fig:example-computon-space}. Although we omit identity computons for clarity, the above definition reveals that the identities are elements of the set $(R \circ T \circ P)(Q)_1$. Without loss of generality, hereafter we do not show identity computons when depicting program spaces.

\begin{figure}[!h]
\centering
\begin{tikzpicture}
	\GraphInit[vstyle=Classic]
	\tikzset{VertexStyle/.append style=category-vertex}
	
	\getvertex{show}{$d_1$}{below=.04cm}{0}{4}{d1}
	\getvertex{show}{$d_2$}{below=.04cm}{2}{4}{d2}
	\getvertex{show}{$d_3$}{below=.04cm}{4}{4}{d3}
	\getvertex{show}{$d_4$}{right=.04cm}{6}{4}{d4}
	\getvertex{show}{$d_5$}{left=.04cm}{-1.5}{4}{d5}
	\getvertex{show}{$d_6$}{left=.04cm}{0}{2.3}{d6}
	\getvertex{show}{$d_7$}{below=.04cm}{2}{2.3}{d7}
	\getvertex{show}{$d_8$}{right=.04cm}{4}{2.3}{d8}
	
	\tikzset{EdgeStyle/.style=alive-organism}\Edge[label=$f_1$,labelstyle=alive-label](d1)(d2)
	\tikzset{EdgeStyle/.style=alive-organism}\Edge[label=$f_2$,labelstyle=alive-label](d2)(d3)
	\tikzset{EdgeStyle/.style=alive-organism}\Edge[label=$f_3$,labelstyle=alive-label](d3)(d4)
	\tikzset{EdgeStyle/.style=alive-organism}\Edge[label=$f_4$,labelstyle=alive-label](d1)(d5)
	\tikzset{EdgeStyle/.style=alive-organism}\Edge[label=$f_5$,labelstyle=alive-label](d6)(d7)
	\tikzset{EdgeStyle/.style=alive-organism}\Edge[label=$f_6$,labelstyle=alive-label](d7)(d8)
	
	\tikzset{EdgeStyle/.style=composite-organism}\Edge[style={bend left=30}, label=$f_2 \circ f_1$, labelstyle=alive-label](d1)(d3)
	\tikzset{EdgeStyle/.style=composite-organism}\Edge[style={bend left=50}, label=$f_3 \circ f_2 \circ f_1$, labelstyle=alive-label](d1)(d4)
	\tikzset{EdgeStyle/.style=composite-organism}\Edge[style={bend right=30}, label=$f_3 \circ f_2$, labelstyle={below=-1,fill=none,scale=0.8}](d2)(d4)
	\tikzset{EdgeStyle/.style=composite-organism}\Edge[style={bend left=30}, label=$f_6 \circ f_5$, labelstyle={above=-1,fill=none,scale=0.8}](d6)(d8)
\end{tikzpicture}
\caption{Graphical representation of the program space $(R \circ T \circ P)(Q)$ described in Example~\ref{example:composition-semantics}.}
\label{fig:example-computon-space}
\end{figure}
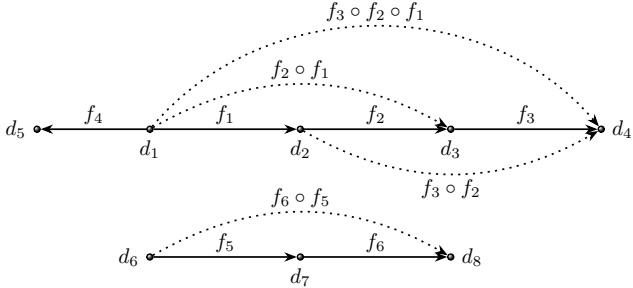
\end{example}

Example~\ref{example:composition-semantics} just shows a space in which the set of data types and the set of computons are finite. Nevertheless, it is possible to construct spaces of finite data types and infinite computons. For this, it suffices to define oriented cycles in the interdependence of computons (see Theorem~\ref{theorem:infinite-paths}). For instance, we can transform the finite space $(R \circ T \circ P)(Q)$ into an infinite one by defining the computon $f_{7}:d_4 \rightarrow d_1$ in $F$. 

\section{Composition Machines}
\label{sec:composition-machines}

We propose the notion of composition machines for the emergence of sequential program spaces. These machines allow the definition and the execution of self-organising software models~\cite{arellanes_self-organizing_2021}, and represent a shift from the computation of single programs to the emergence of program spaces.

\begin{definition} \label{def:machine} [Composition Machine]
A \emph{composition machine} $M$ is a septuple $(D,F,Q,\mu,S,N,\delta)$ where:
\begin{itemize}
\item $D$ is a non-empty finite set of data types,
\item $F$ is a non-empty finite set of computons such that $\forall f,g \in F, cod(f) = cod(g) \iff f = g$ and $\forall f \in F, dom(f) \neq cod(f)$,
\item $Q$ is an acyclic quiver where each vertex $x \in Q_0$ represents a data type $d \in D$ and each arrow $\alpha \in Q_1$ (called an organism) represents a computon $f \in F$ such that $\forall \alpha_1,\alpha_2 \in Q_1, \tau(\alpha_1) = \tau(\alpha_2) \iff \alpha_1 = \alpha_2$ and $\forall \alpha \in Q_1, s(\alpha) \neq \tau(\alpha)$,
\item $\mu$ consists of a pair of bijective functions, $\mu_0: Q_0 \rightarrow D$ and $\mu_1: Q_1 \rightarrow F$, for mapping vertices to data types and organisms to computons, respectively,
\item $S=\{0,1\}$ is the set of possible organism states,
\item $N=\{N_1,N_2,N_3,N_4\}$ is a non-empty finite collection of neighbourhoods:
\begin{itemize}
\item $N_1 \subseteq Q_1$ is a set where each element $(\alpha) \in N_1$ is the unary neighbourhood of an isolated organism $\alpha \in Q_1$ that has has no neighbours to the right and no neighbours to the left, i.e., $\forall (\alpha) \in N_1, \nexists \alpha_0 \in Q_1, \tau(\alpha_0)=s(\alpha)~\lor~\tau(\alpha)=s(\alpha_0)$.
\item $N_2 \subseteq Q_1 \times Q_1$ is a set where each element $(\alpha_1,\alpha_2) \in N_2$ is the binary neighbourhood of an organism $\alpha_1 \in Q_1$ that has only one neighbour to the right, i.e., $\forall (\alpha_1,\alpha_2) \in N_2, (\nexists \alpha_0 \in Q_1, \tau(\alpha_0)=s(\alpha_1))~\land~\tau(\alpha_1)=s(\alpha_2)$.
\item $N_3 \subseteq Q_1 \times Q_1$ is a set where each element $(\alpha_1,\alpha_2) \in N_3$ is the binary neighbourhood of an organism $\alpha_2 \in Q_1$ that has only one neighbour to the left, i.e., $\forall (\alpha_1,\alpha_2) \in N_3, (\nexists \alpha_0 \in Q_1, \tau(\alpha_2)=s(\alpha_0))~\land~\tau(\alpha_1)=s(\alpha_2)$.
\item $N_4 \subseteq Q_1 \times Q_1 \times Q_1$ is a set where each element $(\alpha_1,\alpha_2, \alpha_3) \in N_4$ is the ternary neighbourhood of an organism $\alpha_2 \in Q_1$ which has exactly one neighbour to the right and exactly one neighbour to the left, i.e., $\forall (\alpha_1,\alpha_2,\alpha_3) \in N_4, \tau(\alpha_1)=s(\alpha_2)~\land~\tau(\alpha_2)=s(\alpha_3)$.
\end{itemize}
\item $\delta$ consists of four local state transition functions:
\begin{itemize}
\item $\delta_1: S \rightarrow S$ for every \emph{isolated organism} $\alpha \in Q_1$ with a neighbourhood $(\alpha) \in N_1$,
\item $\delta_2: S^2 \rightarrow S$ for every organism $\alpha_1 \in Q_1$ with a neighbourhood $(\alpha_1,\alpha_2) \in N_2$,
\item $\delta_3: S^2 \rightarrow S$ for every organism $\alpha_2 \in Q_1$ with a neighbourhood $(\alpha_1,\alpha_2) \in N_3$, and
\item $\delta_4: S^3 \rightarrow S$ for every organism $\alpha_2 \in Q_1$ with a neighbourhood $(\alpha_1,\alpha_2, \alpha_3) \in N_4$.
\end{itemize} 
\end{itemize}
\end{definition}

Intuitively, a composition machine $M=(D,F,Q,\mu,S,N,\delta)$ consists of $|F|$ \emph{organisms} operating in discrete moments in time.\footnote{As the function $\mu_1$ is bijective (see Definition~\ref{def:machine}), we have that $|F|=|Q_1|$.} Each organism represents a computon $f \in F$ and has at most three neighbours (including itself). We say that two different organisms $\alpha_1 \in Q_1$ and $\alpha_2 \in Q_1$ are neighbours if there is a non-trivial path $(\alpha_1,\alpha_2)$ or a non-trivial path $(\alpha_2,\alpha_1)$. Equivalently, two different organisms are neighbours if the computons they represent are composable (see Definition~\ref{def:composite-computon}).

At each time step $t \in \mathbb{N}$, an organism $\alpha \in Q_1$ is in a state $c(\alpha)^t \in S$. If $c(\alpha)^t=1$, we say that $\alpha$ is alive; otherwise, we say that $\alpha$ is dead (see Figure~\ref{fig:organism-states}).

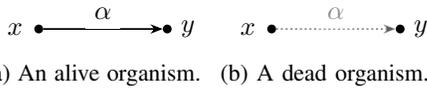
\begin{figure}[!h]
\centering
\subcaptionbox{An alive organism. \label{fig:organism-alive}}
{
  \begin{tikzpicture}
	\GraphInit[vstyle=Classic]
	\tikzset{VertexStyle/.append style=quiver-vertex}

	\Vertex[L=\hbox{$x$}, style={left=.15cm}, x=0.0cm, y=2.5cm]{v0}
	\Vertex[L=\hbox{$y$}, x=1.5cm, y=2.5cm]{v1}

	\tikzset{EdgeStyle/.style=alive-organism}
	\Edge[labelstyle=above, label=$\alpha$](v0)(v1)
  \end{tikzpicture} 
}
\subcaptionbox{A dead organism. \label{fig:organism-dead}}
{
  \begin{tikzpicture}
	\GraphInit[vstyle=Classic]
	\tikzset{VertexStyle/.append style=quiver-vertex}

	\Vertex[L=\hbox{$x$}, style={left=.15cm}, x=2cm, y=2.5cm]{v0}
	\Vertex[L=\hbox{$y$}, x=3.5cm, y=2.5cm]{v1}

	\tikzset{EdgeStyle/.style=dead-organism}
    \Edge[labelstyle=above, label=$\alpha$](v0)(v1)
  \end{tikzpicture} 
}
\caption{At time $t$, an organism $\alpha:x \rightarrow y$ is (a) alive if $c(\alpha)^t=1$ or (b) dead if $c(\alpha)^t=0$.}
\label{fig:organism-states}
\end{figure}

The configuration of $M$ at $t$ is then a function $c:Q_1 \rightarrow S$ that assigns a state to each organism $\alpha \in Q_1$. Intuitively, it is a snapshot of all the states in the system of organisms at some moment in time. Initially, at $t=0$, $M$ is in the so-called \emph{initial configuration} and, in subsequent steps, the states of its organisms are updated in parallel according to four $\delta$ functions. 

A $\delta$ function is a local update rule that defines the next state of an organism based on the current state of $n$ neighbours. More concretely, if the neighbours of an organism have states $c(\alpha_1)^t,c(\alpha_2)^t,\ldots,c(\alpha_n)^t$ at $t$, then the state of that organism at $t+1$ is given by $\delta_i(c(\alpha_1)^t,c(\alpha_2)^t,\ldots,c(\alpha_n)^t)$ such that $i \in \mathbb{N} \cap [1,4]$ and $n \in \mathbb{N} \cap [1,3]$. Particularly, $n=1$ for unary neighbourhoods, $n=2$ for binary neighbourhoods and $n=3$ for ternary neighbourhoods. 

Moreover, as an organism can be alive or dead, there are $2^1$, $2^2$ and $2^3$ possible patterns for unary, binary and ternary neighbourhoods, respectively (see Figure~\ref{fig:neighbourhood-configurations}). So, $|dom(\delta_1)|=|S|=2$, $|dom(\delta_2)|=|dom(\delta_3)|=|S|^2=4$ and $|dom(\delta_4)|=|S|^3=8$. Accordingly, there are $2^2 \times 2^4 \times 2^8 = 2^{14}$ possible rules for a composition machine.

\begin{figure}[!h]
\centering
\subcaptionbox{\label{fig:neighbourhood-configurations-ternary}}[0.4\linewidth]
{
  \begin{tikzpicture}
	\GraphInit[vstyle=Classic]
	\tikzset{VertexStyle/.append style=quiver-vertex}
	
	\Vertex[L=\hbox{}, x=0.0cm, y=4.1cm]{v0}
	\Vertex[L=\hbox{}, x=1.0cm, y=4.1cm]{v1}
	\Vertex[L=\hbox{}, x=2.0cm, y=4.1cm]{v2}
	\Vertex[L=\hbox{}, x=3.0cm, y=4.1cm]{v3}
	\tikzset{EdgeStyle/.style=alive-organism}
	\Edge[labelstyle=below](v0)(v1)
	\tikzset{EdgeStyle/.style=alive-organism}
	\Edge[labelstyle=below](v1)(v2)
	\tikzset{EdgeStyle/.style=alive-organism}
	\Edge[labelstyle=below](v2)(v3)
	\draw (-0.1,3.9) rectangle (3.1,4.3);	
	
	\Vertex[L=\hbox{}, x=0.0cm, y=3.5cm]{v4}
	\Vertex[L=\hbox{}, x=1.0cm, y=3.5cm]{v5}
	\Vertex[L=\hbox{}, x=2.0cm, y=3.5cm]{v6}
	\Vertex[L=\hbox{}, x=3.0cm, y=3.5cm]{v7}
	\tikzset{EdgeStyle/.style=alive-organism}
	\Edge[labelstyle=below](v4)(v5)
	\tikzset{EdgeStyle/.style=alive-organism}
	\Edge[labelstyle=below](v5)(v6)
	\tikzset{EdgeStyle/.style=dead-organism}
	\Edge[labelstyle=below](v6)(v7)
	\draw (-0.1,3.3) rectangle (3.1,3.7);	
	
	\Vertex[L=\hbox{}, x=0.0cm, y=2.9cm]{v8}
	\Vertex[L=\hbox{}, x=1.0cm, y=2.9cm]{v9}
	\Vertex[L=\hbox{}, x=2.0cm, y=2.9cm]{v10}
	\Vertex[L=\hbox{}, x=3.0cm, y=2.9cm]{v11}
	\tikzset{EdgeStyle/.style=alive-organism}
	\Edge[labelstyle=below](v8)(v9)
	\tikzset{EdgeStyle/.style=dead-organism}
	\Edge[labelstyle=below](v9)(v10)
	\tikzset{EdgeStyle/.style=alive-organism}
	\Edge[labelstyle=below](v10)(v11)
	\draw (-0.1,2.7) rectangle (3.1,3.1);
	
	\Vertex[L=\hbox{}, x=0.0cm, y=2.3cm]{v12}
	\Vertex[L=\hbox{}, x=1.0cm, y=2.3cm]{v13}
	\Vertex[L=\hbox{}, x=2.0cm, y=2.3cm]{v14}
	\Vertex[L=\hbox{}, x=3.0cm, y=2.3cm]{v15}
	\tikzset{EdgeStyle/.style=alive-organism}
	\Edge[labelstyle=below](v12)(v13)
	\tikzset{EdgeStyle/.style=dead-organism}
	\Edge[labelstyle=below](v13)(v14)
	\tikzset{EdgeStyle/.style=dead-organism}
	\Edge[labelstyle=below](v14)(v15)
	\draw (-0.1,2.1) rectangle (3.1,2.5);	
	
	\Vertex[L=\hbox{}, x=0.0cm, y=1.7cm]{v16}
	\Vertex[L=\hbox{}, x=1.0cm, y=1.7cm]{v17}
	\Vertex[L=\hbox{}, x=2.0cm, y=1.7cm]{v18}
	\Vertex[L=\hbox{}, x=3.0cm, y=1.7cm]{v19}
	\tikzset{EdgeStyle/.style=dead-organism}
	\Edge[labelstyle=below](v16)(v17)
	\tikzset{EdgeStyle/.style=alive-organism}
	\Edge[labelstyle=below](v17)(v18)
	\tikzset{EdgeStyle/.style=alive-organism}
	\Edge[labelstyle=below](v18)(v19)
	\draw (-0.1,1.5) rectangle (3.1,1.9);
	
	\Vertex[L=\hbox{}, x=0.0cm, y=1.1cm]{v20}
	\Vertex[L=\hbox{}, x=1.0cm, y=1.1cm]{v21}
	\Vertex[L=\hbox{}, x=2.0cm, y=1.1cm]{v22}
	\Vertex[L=\hbox{}, x=3.0cm, y=1.1cm]{v23}
	\tikzset{EdgeStyle/.style=dead-organism}
	\Edge[labelstyle=below](v20)(v21)
	\tikzset{EdgeStyle/.style=alive-organism}
	\Edge[labelstyle=below](v21)(v22)
	\tikzset{EdgeStyle/.style=dead-organism}
	\Edge[labelstyle=below](v22)(v23)
	\draw (-0.1,0.9) rectangle (3.1,1.3);
	
	\Vertex[L=\hbox{}, x=0.0cm, y=0.6cm]{v24}
	\Vertex[L=\hbox{}, x=1.0cm, y=0.6cm]{v25}
	\Vertex[L=\hbox{}, x=2.0cm, y=0.6cm]{v26}
	\Vertex[L=\hbox{}, x=3.0cm, y=0.6cm]{v27}
	\tikzset{EdgeStyle/.style=dead-organism}
	\Edge[labelstyle=below](v24)(v25)
	\tikzset{EdgeStyle/.style=dead-organism}
	\Edge[labelstyle=below](v25)(v26)
	\tikzset{EdgeStyle/.style=alive-organism}
	\Edge[labelstyle=below](v26)(v27)
	\draw (-0.1,0.4) rectangle (3.1,0.8);
	
	\Vertex[L=\hbox{}, x=0.0cm, y=0.0cm]{v28}
	\Vertex[L=\hbox{}, x=1.0cm, y=0.0cm]{v29}
	\Vertex[L=\hbox{}, x=2.0cm, y=0.0cm]{v30}
	\Vertex[L=\hbox{}, x=3.0cm, y=0.0cm]{v31}
	\tikzset{EdgeStyle/.style=dead-organism}
	\Edge[labelstyle=below](v28)(v29)
	\tikzset{EdgeStyle/.style=dead-organism}
	\Edge[labelstyle=below](v29)(v30)
	\tikzset{EdgeStyle/.style=dead-organism}
	\Edge[labelstyle=below](v30)(v31)
	\draw (-0.1,-0.2) rectangle (3.1,0.2);
  \end{tikzpicture} 
}
\subcaptionbox{\label{fig:neighbourhood-configurations-binary}}[0.25\linewidth]
{
  \begin{tikzpicture}
	\GraphInit[vstyle=Classic]
	\tikzset{VertexStyle/.append style=quiver-vertex}

	\Vertex[L=\hbox{}, x=0.0cm, y=4.1cm]{v0}
	\Vertex[L=\hbox{}, x=1.0cm, y=4.1cm]{v1}
	\Vertex[L=\hbox{}, x=2.0cm, y=4.1cm]{v2}
	\tikzset{EdgeStyle/.style=alive-organism}
	\Edge[labelstyle=below](v0)(v1)
	\tikzset{EdgeStyle/.style=alive-organism}
	\Edge[labelstyle=below](v1)(v2)
	\draw (-0.1,3.9) rectangle (2.1,4.3);
	
	\Vertex[L=\hbox{}, x=0.0cm, y=3.5cm]{v3}
	\Vertex[L=\hbox{}, x=1.0cm, y=3.5cm]{v4}
	\Vertex[L=\hbox{}, x=2.0cm, y=3.5cm]{v5}
	\tikzset{EdgeStyle/.style=alive-organism}
	\Edge[labelstyle=below](v3)(v4)
	\tikzset{EdgeStyle/.style=dead-organism}
	\Edge[labelstyle=below](v4)(v5)		
	\draw (-0.1,3.3) rectangle (2.1,3.7);
	
	\Vertex[L=\hbox{}, x=0.0cm, y=2.9cm]{v6}
	\Vertex[L=\hbox{}, x=1.0cm, y=2.9cm]{v7}
	\Vertex[L=\hbox{}, x=2.0cm, y=2.9cm]{v8}
	\tikzset{EdgeStyle/.style=dead-organism}
	\Edge[labelstyle=below](v6)(v7)
	\tikzset{EdgeStyle/.style=alive-organism}
	\Edge[labelstyle=below](v7)(v8)		
	\draw (-0.1,2.7) rectangle (2.1,3.1);
	
	\Vertex[L=\hbox{}, x=0.0cm, y=2.3cm]{v9}
	\Vertex[L=\hbox{}, x=1.0cm, y=2.3cm]{v10}
	\Vertex[L=\hbox{}, x=2.0cm, y=2.3cm]{v11}
	\tikzset{EdgeStyle/.style=dead-organism}
	\Edge[labelstyle=below](v9)(v10)
	\tikzset{EdgeStyle/.style=dead-organism}
	\Edge[labelstyle=below](v10)(v11)		
	\draw (-0.1,2.1) rectangle (2.1,2.5);
  \end{tikzpicture} 
}
\subcaptionbox{\label{fig:neighbourhood-configurations-unary}}[0.3\linewidth]
{
  \begin{tikzpicture}
	\GraphInit[vstyle=Classic]
	\tikzset{VertexStyle/.append style=quiver-vertex}

	\Vertex[L=\hbox{}, x=0.0cm, y=3.5cm]{v0}
	\Vertex[L=\hbox{}, x=1.0cm, y=3.5cm]{v1}
	\tikzset{EdgeStyle/.style=alive-organism}
	\Edge[labelstyle=below](v0)(v1)
	\draw (-0.1,3.3) rectangle (1.1,3.7);
	
	\Vertex[L=\hbox{}, x=0.0cm, y=2.9cm]{v1}
	\Vertex[L=\hbox{}, x=1.0cm, y=2.9cm]{v2}
	\tikzset{EdgeStyle/.style=dead-organism}
	\Edge[labelstyle=below](v1)(v2)
	\draw (-0.1,2.7) rectangle (1.1,3.1);
  \end{tikzpicture} 
}
\caption{Possible patterns for (a) ternary neighbourhoods, (b) binary neighbourhoods and (c) unary neighbourhoods.}
\label{fig:neighbourhood-configurations}
\end{figure}
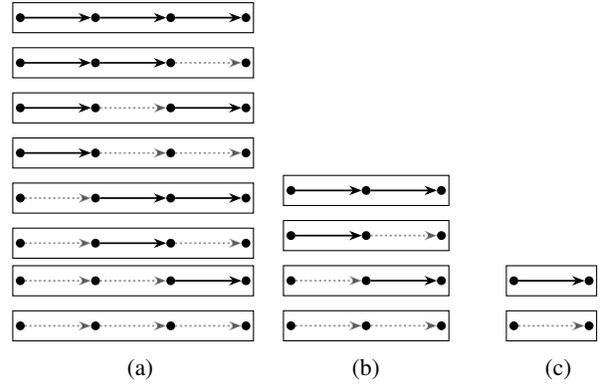

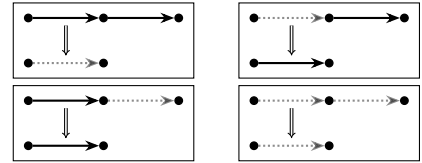
\begin{figure}[!h]
\centering
\begin{tikzpicture}
	\machinerulebinary{0}{4.1}{alive}{alive}{dead}{left}
	\machinerulebinary{0}{3}{alive}{dead}{alive}{left}
	\machinerulebinary{3}{4.1}{dead}{alive}{alive}{left}
	\machinerulebinary{3}{3}{dead}{dead}{dead}{left}		
\end{tikzpicture}
\caption{An example of a local transition rule for an organism that has only one neighbour to the right. This rule is equivalent to a XOR operation and corresponds to the $\delta_2$ function described in Equation~\ref{eq:rule-xor-example}.}
\label{fig:rule-xor-example}
\end{figure}

To explain how rules can be defined, let us consider a simple transition function for organisms that have only one neighbour to the right:

\begin{equation} \label{eq:rule-xor-example}
\delta_2(c(\alpha_1)^t,c(\alpha_2)^t) = c(\alpha_1)^t \oplus c(\alpha_2)^t
\end{equation}
where $\oplus$ is the exclusive-or (XOR) operator defined in Boolean algebra.

Without loss of generality, let us assume that $(\alpha_1,\alpha_2) \in N_2$ is the binary neighbourhood of some organism $\alpha_1$. If we say that both organisms are alive at some moment in time $t$, then the next state of $\alpha_1$ would be given by $c(\alpha_1)^{t+1}=\delta_2(1,1)=0$, as shown in Figure~\ref{fig:rule-xor-example}.

Figure~\ref{fig:rule-xor-example} shows an example of a local transition rule for the binary neighbourhood $N_2$ in some machine $M$. In a similar fashion, we can define rules for the other neighbourhoods to independently transform organisms' states. These transformations occur by simultaneously applying the appropriate rule for each $\alpha \in Q_1$, leading to a time evolution of the global configuration of $M$. Particularly, evolving a configuration $c$ into another one is given by a global transition function $G$:

\begin{equation} \label{eq:global-transition}
G: S^{|Q_1|} \rightarrow S^{|Q_1|}
\end{equation}

So, the time evolution of $M$ is a direct consequence of the repeated application of $G$:

\begin{equation}
c \mapsto G(c) \mapsto (G \circ G)(c) \mapsto (G \circ G \circ G)(c) \mapsto \ldots
\end{equation}

As $c$ is the initial configuration, we can deduce that the machine's orbit $orb(c)$ is equal to:

\begin{equation}
c, G(c), (G \circ G)(c), (G \circ G \circ G)(c), \ldots
\end{equation}

In this case, time refers to the number of applications of $G$. For example, the configuration $c$ exists at time $t=0$, the configuration $G(c)$ is present at time $t=1$, the configuration $(G \circ G)(c)$ appears at time $t=2$, and so on. Indeed, the time evolution of a composition machine is similar to the one exhibited by a synchronous cellular automaton~\cite{wolfram_new_2002}. The difference lies in their underlying semantics. 

In a composition machine $M$, each organism $\alpha \in Q_1$ is causally related to some computon $f \in F$. Furthermore, the global state of $M$ at $t$ is semantically equivalent to a category which defines all the possible sequential compositions at that moment in time. So, the machine's purpose is not the computation of a single program but the emergence of a whole program space. More formally, a program space in $M$ at $t$ is generated from the path category $P(\overrightarrow{Q}^t)$, where $\overrightarrow{Q}^t$ is referred to as the \emph{alive quiver} at $t$.

\begin{definition} [Alive Quiver] \label{def:alive-quiver}
Let $M=(D,F,Q,\mu,S,N,\delta)$ be a composition machine and $c$ be some machine configuration at $t$. The \emph{alive quiver} $\overrightarrow{Q}^t \subseteq Q$ is then a quadruple $(\overrightarrow{Q}_0,\overrightarrow{Q}_1,\overrightarrow{s},\overrightarrow{\tau})$ where:
\begin{itemize}
\item $\overrightarrow{Q}_0=\{x \in Q_0 \mid \exists \alpha \in Q_1, c(\alpha)^t=1~\land~s(\alpha)=x~\lor~\tau(\alpha)=x\}$,
\item $\overrightarrow{Q}_1=\{\alpha \in Q_1 \mid c(\alpha)^t=1\}$,
\item $\overrightarrow{s}=\{(\alpha,s(\alpha)) \in Q_1 \times Q_0 \mid c(\alpha)^t=1 \}$, and
\item $\overrightarrow{\tau}=\{(\alpha,\tau(\alpha)) \in Q_1 \times Q_0 \mid c(\alpha)^t=1 \}$.
\end{itemize}
\end{definition}

The path category on an alive quiver $\overrightarrow{Q}^t$ generates the space $(R \circ T \circ P)(\overrightarrow{Q}^t)$ of all the possible sequential computons at time $t$. As per Definition~\ref{def:composite-functor-space}, this space is produced through a computon category $(T \circ P)(\overrightarrow{Q}^t)$ which we refer to as the \emph{alive category} at $t$.

\begin{definition} [Alive Category] \label{def:alive-category}
Given a composition machine $M=(D,F,Q,\mu,S,N,\delta)$ and some alive quiver $\overrightarrow{Q}^t=(\overrightarrow{Q}_0,\overrightarrow{Q}_1,\overrightarrow{s},\overrightarrow{\tau})$, the \emph{alive category} $(T \circ P)(\overrightarrow{Q}^t)$ generated from $P(\overrightarrow{Q}^t)$ consists of:
\begin{itemize}
\item a collection $(T \circ P)(\overrightarrow{Q}^t)_0$ of data types such that $\forall x \in P(\overrightarrow{Q}^t)_0, \exists ! d \in (T \circ P)(\overrightarrow{Q}^t)_0, d = \mu_0(x)$,
\item a collection $(T \circ P)(\overrightarrow{Q}^t)_1$ of computons where:
\begin{itemize}
\item $\forall x \in P(\overrightarrow{Q}^t)_0, \exists ! 1_d \in (T \circ P)(\overrightarrow{Q}^t)_1, d = \mu_0(x)$, and
\item $\forall (\alpha_n,\ldots,\alpha_1) \in P(\overrightarrow{Q}^t)_1, \exists ! (f_n \circ \cdots \circ f_1) \in (T \circ P)(\overrightarrow{Q}^t)_1, f_1 = \mu_1(\alpha_1)~\land~\cdots~\land~f_n = \mu_1(\alpha_n)$ with $n \geq 1$.
\end{itemize} 
\end{itemize}
\end{definition}

\begin{theorem} \label{theorem:maximal-space}
Given a composition machine $M=(D,F,Q,\mu,S,N,\delta)$, $(R \circ T \circ P)(Q)$ is the maximal program space defining all the possible sequential programs that can exist in $M$ at any moment in time. 
\end{theorem}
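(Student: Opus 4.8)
The plan is to show that for every time $t \in \mathbb{N}$, the program space $(R \circ T \circ P)(\overrightarrow{Q}^t)$ generated at $t$ is contained in $(R \circ T \circ P)(Q)$, and that $(R \circ T \circ P)(Q)$ is itself a legitimate program space realisable in $M$ (namely the one associated with the configuration that keeps every organism alive). The containment is a statement about subquivers propagating through the functors $P$, $T$ and $R$: since $\overrightarrow{Q}^t \subseteq Q$ by Definition~\ref{def:alive-quiver}, I would first establish that subquiver inclusion is preserved by the path-category functor, i.e.\ $P(\overrightarrow{Q}^t)$ is a subcategory of $P(Q)$. This follows because $\overrightarrow{Q}_0 \subseteq Q_0$, $\overrightarrow{Q}_1 \subseteq Q_1$, and the source/target restrictions agree, so every path in $\overrightarrow{Q}^t_*$ is a path in $Q_*$; hence $P(\overrightarrow{Q}^t)_0 \subseteq P(Q)_0$ and $P(\overrightarrow{Q}^t)_1 \subseteq P(Q)_1$, with identities and concatenations computed the same way (Theorem~\ref{theorem:adjoint-functor} guarantees $P$ is functorial, so it respects the inclusion morphism of quivers).

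Next I would push this inclusion forward along $T$ and $R$. By Definition~\ref{def:transformer} the transformer sends each vertex to the data type $\mu_0$ assigns it and each path $(\alpha_n,\ldots,\alpha_1)$ to the composite $\mu_1(\alpha_n) \circ \cdots \circ \mu_1(\alpha_1)$; since $\mu_0,\mu_1$ are the \emph{same} bijections used for all of $Q$, the image $(T \circ P)(\overrightarrow{Q}^t)$ — the alive category of Definition~\ref{def:alive-category} — is literally a subcollection of $(T \circ P)(Q)$: every data type and every composite computon appearing at time $t$ already appears in the category built from the full quiver $Q$. Applying the program-space functor $R$ (Definition~\ref{def:program-space}) to both sides preserves this, so $(R \circ T \circ P)(\overrightarrow{Q}^t)_0 \subseteq (R \circ T \circ P)(Q)_0$ and $(R \circ T \circ P)(\overrightarrow{Q}^t)_1 \subseteq (R \circ T \circ P)(Q)_1$ for every $t$. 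This is the ``upper bound'' half of maximality.

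To finish I would argue that the bound is \emph{attained}, so $(R \circ T \circ P)(Q)$ genuinely is a program space that can exist in $M$ rather than a strictly larger artefact. Consider the configuration $c^\ast$ with $c^\ast(\alpha) = 1$ for all $\alpha \in Q_1$; then by Definition~\ref{def:alive-quiver} the corresponding alive quiver is $\overrightarrow{Q} = Q$ itself (here I use that $Q$ has no isolated vertices that are not endpoints of some arrow — or, if such vertices exist, that they contribute only identity computons, which are present in both spaces anyway, so the generated program spaces coincide). Hence $(R \circ T \circ P)(Q)$ occurs as the program space at any time $M$ is in configuration $c^\ast$, and since $Q$ is acyclic and finite, Theorem~\ref{theorem:infinite-paths} ensures this space has finitely many computons and is well-defined. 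Combining the two halves: every program space $(R \circ T \circ P)(\overrightarrow{Q}^t)$ that $M$ can exhibit is a subspace of $(R \circ T \circ P)(Q)$, and $(R \circ T \circ P)(Q)$ is attained, which is exactly the assertion that it is the maximal program space of $M$.

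The main obstacle I anticipate is bookkeeping around the vertex set of the alive quiver and trivial paths: Definition~\ref{def:alive-quiver} only keeps vertices that are endpoints of an alive arrow, so strictly speaking $\overrightarrow{Q}^t$ never contains isolated vertices, whereas $Q$ might — I need to check whether the intended reading of ``maximal'' is up to the identity computons on such stray vertices, or whether the machine axioms (e.g.\ $F$ nonempty, every arrow non-loop) rule them out, and phrase the attainment step accordingly. The rest is a routine diagram chase verifying that inclusions of quivers are carried by $P$, $T$, $R$ to inclusions of the respective categories/spaces, using only functoriality and the fact that $\mu_0,\mu_1$ are global bijections fixed once and for all in Definition~\ref{def:machine}.
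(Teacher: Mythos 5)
Your proposal is correct and takes essentially the same route as the paper's proof: every alive quiver is a subquiver of $Q$ by Definition~\ref{def:alive-quiver}, $Q$ itself is the largest such quiver, and applying $R \circ T \circ P$ yields the maximal program space. The paper states this in three sentences and leaves implicit the two points you spell out (that the quiver inclusion is carried through $P$, $T$ and $R$ to an inclusion of program spaces, and that the all-alive configuration attains the bound), so your additional detail --- including the caveat about vertices of $Q$ not incident to any arrow --- only makes the paper's terse argument more precise.
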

\begin{proof}
Let $M=(D,F,Q,\mu,S,N,\delta)$ be a composition machine and $\overrightarrow{Q}^t$ be some alive quiver at time $t$. By Definition~\ref{def:alive-quiver}, we know that $\overrightarrow{Q}^t \subseteq Q$. Then, it trivially follows that $\overrightarrow{Q}^t=Q$ is the largest subset and therefore the maximal alive quiver that can be formed at any moment in time. By Definition~\ref{def:composite-functor-space}, it follows that we can construct the maximal program space $(R \circ T \circ P)(Q)$ as required.
\end{proof}

\section{Examples}
\label{sec:examples}

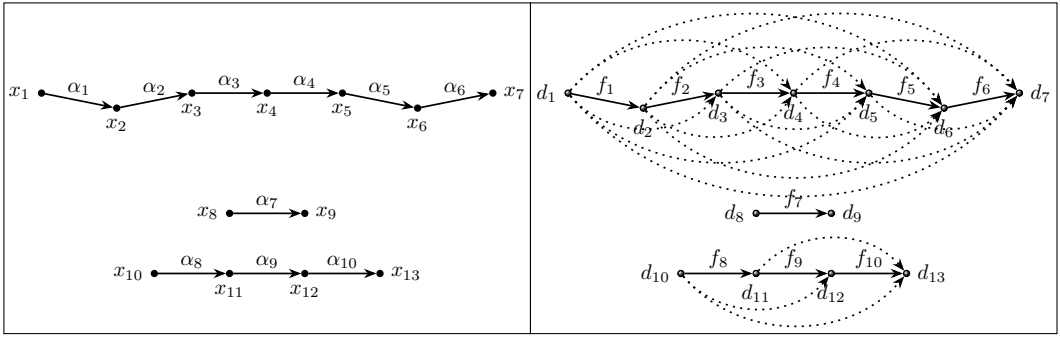
\begin{figure*}[!h]
\centering
\begin{tikzpicture}
	
	\quiverx{alive}{alive}{alive}{alive}{alive}{alive}
	\quivery{alive}
	\quiverz{alive}{alive}{alive}
	\machineexample{0}{12}{-1}{}
	
	\quiverx{alive}{alive}{alive}{alive}{alive}{alive}
	\quivery{alive}
	\quiverz{alive}{alive}{alive}
	\spaceexample{7}{12}{-1}
	
\end{tikzpicture}
\caption{The quiver $Q$ and the corresponding maximal program space in $M$.}
\label{fig:examples-maximal}
\end{figure*}

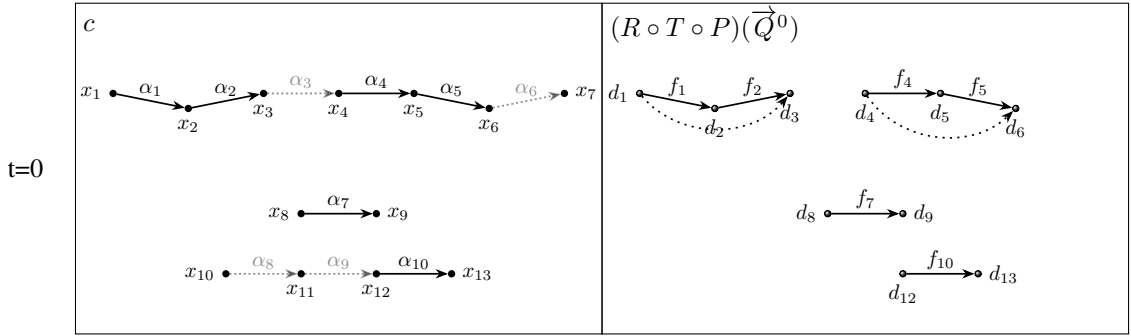
\begin{figure*}[!h]
\centering
\begin{tikzpicture}
	
	\quiverx{alive}{alive}{dead}{alive}{alive}{dead}
	\quivery{alive}
	\quiverz{dead}{dead}{alive}
	\machineexample{0}{12}{0}{$c$}
	
	\quiverx{alive}{alive}{hide}{alive}{alive}{hide}
	\quivery{alive}
	\quiverz{hide}{hide}{alive}
	\spaceexample{7}{12}{0}
	
\end{tikzpicture}
\caption{The initial configuration in both examples and its corresponding program space.}
\label{fig:examples-initial}
\end{figure*}

We now present two examples to demonstrate how (complex) sequential program spaces emerge from simple rules. For all of them, we consider the composition machine $M=(D,F,Q,\mu,S,N,\delta)$ with a fixed set of data types and a fixed set of computons, so $\mu_0$ and $\mu_1$ remain unchanged. We only adjust the local transition functions to demonstrate how different self-organising behaviours can be programmed. In this sense, the initial configuration of $M$ and the neighbourhood collection are shared between the examples. Formally, our composition machine $M$ is defined as follows:

\begin{itemize}
\item $D=\{d_1, d_2, \ldots, d_{13}\}$,
\item $F=\{f_i: d_i \rightarrow d_{i+1} \mid  i \in \mathbb{N} \cap [1,6]~\land~d_i,d_{i+1} \in D\} \cup \{f_7: d_8 \rightarrow d_9 \mid d_8,d_9 \in D\} \cup \{f_i: d_{i+2} \rightarrow d_{i+3} \mid i \in \mathbb{N} \cap [8,10]~\land~d_{i+2},d_{i+3} \in D\}$,
\item $Q=(Q_0,Q_1,s,\tau)$ where:
\begin{itemize}
\item $Q_0=\{x_1,x_2,\ldots,x_{13}\}$,
\item $Q_1=\{\alpha_1,\alpha_2,\ldots,\alpha_{10}\}$,
\item $s=\{(\alpha_i,x_i) \mid  i \in \mathbb{N} \cap [1,6]\} \cup \{(\alpha_7,x_8)\} \cup \{(\alpha_i,x_{i+2}) \mid i \in \mathbb{N} \cap [8,10]\}$, 
\item $\tau=\{(\alpha_i,x_{i+1}) \mid  i \in \mathbb{N} \cap [1,6]\} \cup \{(\alpha_7,x_9)\} \cup \{(\alpha_i,x_{i+3}) \mid i \in \mathbb{N} \cap [8,10]\}$
\end{itemize}
\item $\mu_0=\{(x_i,d_i) \in Q_0 \times D \mid i \in \mathbb{N} \cap [1,13]\}$,
\item $\mu_1=\{(\alpha_i,f_i) \in Q_1 \times F \mid i \in \mathbb{N} \cap [1,10]\}$,
\item $S=\{0,1\}$,
\item $N=\{N_1,N_2,N_3,N_4\}$ where:
\begin{itemize}
\item $N_1=\{(\alpha_7)\}$,
\item $N_2=\{(\alpha_1,\alpha_2),(\alpha_8,\alpha_9)\}$,
\item $N_3=\{(\alpha_5,\alpha_6),(\alpha_9,\alpha_{10})\}$, 
\item $N_4=\{(\alpha_1,\alpha_2,\alpha_3),(\alpha_2,\alpha_3,\alpha_4),(\alpha_3,\alpha_4,\alpha_5),\\(\alpha_4,\alpha_5,\alpha_6),(\alpha_8,\alpha_9,\alpha_{10})\}$
\end{itemize}
\item $\delta_1$, $\delta_2$, $\delta_3$ and $\delta_4$ are defined differently in each example.
\end{itemize}

Figure~\ref{fig:examples-maximal} illustrates the quiver $Q$ and the maximal program space in $M$, viz. $(R \circ T \circ P)(Q)$. According to the Theorem~\ref{theorem:maximal-space}, this space is the most complex that our example machine can construct when all the organisms are alive. For simplicity, we do not show the identity computons looping over each data type, and we do not draw labels on composites to improve readability.

Also, to arise familiarity, all the local transition functions described in this section are based on existing ones. Particularly, isolated organisms evolve through NOT or constant rules. Organisms with a binary neighbourhood change their states via OR or XOR operations. And organisms with a ternary neighbourhood evolve according to \emph{Rule 54} or \emph{Rule 122} (c.f. elementary cellular automata~\cite{wolfram_new_2002}). 

In any case, the synchronous application of the appropriate rules induces a time evolution of the global configuration of $M$, starting from: 

$c=\{(\alpha_1,1),(\alpha_2,1),(\alpha_3,0),(\alpha_4,1),(\alpha_5,1),(\alpha_6,0),\\(\alpha_7,1),(\alpha_8,0),(\alpha_9,0),(\alpha_{10},1)\}$

The initial configuration $c$ and its corresponding program space $(R \circ T \circ P)(\overrightarrow{Q}^0)$ are illustrated in Figure~\ref{fig:examples-initial}. Here, we can see that only two composite computons are available in $(R \circ T \circ P)(\overrightarrow{Q}^0)_1$. Specifically, $f_2 \circ f_1$ is present because $c(\alpha_1)^0=1$, $c(\alpha_2)^0=1$, $\mu_1(\alpha_1)=f_1$ and $\mu_1(\alpha_2)=f_2$. Similarly, we have $f_5 \circ f_4 \in (R \circ T \circ P)(\overrightarrow{Q}^0)_1$ since $c(\alpha_4)^0=1$, $c(\alpha_5)^0=1$, $\mu_1(\alpha_4)=f_4$ and $\mu_1(\alpha_5)=f_5$. 

A glance at Figure~\ref{fig:examples-initial} also reveals that the organisms $\alpha_7$ and $\alpha_{10}$ give rise to the computons $f_7$ and $f_{10}$. This is due to the fact that $c(\alpha_{7})^0=1$, $c(\alpha_{10})^0=1$, $\mu_1(\alpha_7)=f_7$ and $\mu_1(\alpha_{10})=f_{10}$. 

\subsection{Combining the NOT operator, the XOR operator and Rule 54 for the emergence of sequential program spaces}

In this example, the orbit of $M$ is given by the synchronous application of the rules illustrated in Figure~\ref{fig:example1-rules}, starting from the initial configuration $c$ (shown in Figure~\ref{fig:examples-initial}). Particularly, $\delta_1$ is the NOT function, $\delta_2$ and $\delta_3$ are XOR operations, and $\delta_4$ corresponds to \emph{Rule 54}.

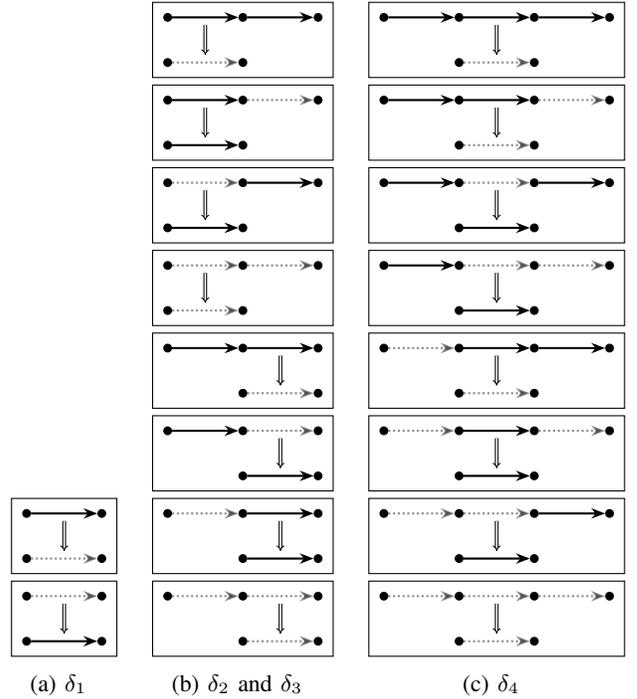
\begin{figure}[!h]
\centering
\subcaptionbox{$\delta_1$ \label{fig:example1-rules-not}}
{
	\begin{tikzpicture}
		\machineruleunary{0}{1.9}{alive}{dead}
		\machineruleunary{0}{0.8}{dead}{alive}
	\end{tikzpicture}
}
\subcaptionbox{$\delta_2$ and $\delta_3$ \label{fig:example1-rules-xor}}
{
	\begin{tikzpicture}
		\machinerulebinary{0}{8.5}{alive}{alive}{dead}{left}
		\machinerulebinary{0}{7.4}{alive}{dead}{alive}{left}	
		\machinerulebinary{0}{6.3}{dead}{alive}{alive}{left}
		\machinerulebinary{0}{5.2}{dead}{dead}{dead}{left}
	
		\machinerulebinary{0}{4.1}{alive}{alive}{dead}{right}
		\machinerulebinary{0}{3}{alive}{dead}{alive}{right}
		\machinerulebinary{0}{1.9}{dead}{alive}{alive}{right}
		\machinerulebinary{0}{0.8}{dead}{dead}{dead}{right}
	\end{tikzpicture}
}
\subcaptionbox{$\delta_4$ \label{fig:example1-rules-54}}
{
	\begin{tikzpicture}
		\machineruleternary{0}{8.5}{alive}{alive}{alive}{dead}	
		\machineruleternary{0}{7.4}{alive}{alive}{dead}{dead}
		\machineruleternary{0}{6.3}{alive}{dead}{alive}{alive}
		\machineruleternary{0}{5.2}{alive}{dead}{dead}{alive}
		\machineruleternary{0}{4.1}{dead}{alive}{alive}{dead}
		\machineruleternary{0}{3}{dead}{alive}{dead}{alive}
		\machineruleternary{0}{1.9}{dead}{dead}{alive}{alive}
		\machineruleternary{0}{0.8}{dead}{dead}{dead}{dead}
	\end{tikzpicture}
}
\caption{Unary neighbourhoods are associated with the NOT operator, binary neighbourhoods with the XOR operator and ternary ones with \emph{Rule 54}.}
\label{fig:example1-rules}
\end{figure}

\begin{figure*}
\centering
\begin{tikzpicture}

	\node[text width=5cm] at (4,13.5) {Global Configuration};
	\node[text width=5cm] at (11.3,13.5) {Program Space};
	
	\quiverx{alive}{alive}{dead}{alive}{alive}{dead}
	\quivery{alive}
	\quiverz{dead}{dead}{alive}
	\machineexample{0}{12}{0}{$c$}
	
	\quiverx{dead}{dead}{alive}{dead}{dead}{alive}
	\quivery{dead}
	\quiverz{dead}{alive}{alive}
	\machineexample{0}{7.4}{1}{$G(c)$}
	
	\quiverx{dead}{alive}{alive}{alive}{alive}{alive}
	\quivery{alive}
	\quiverz{alive}{dead}{dead}
	\machineexample{0}{2.8}{2}{$(G \circ G)(c)$}
	
	\quiverx{alive}{dead}{dead}{dead}{dead}{dead}
	\quivery{dead}
	\quiverz{alive}{alive}{dead}
	\machineexample{0}{-1.8}{3}{$(G \circ G \circ G)(c)$}
	
	\quiverx{alive}{alive}{dead}{dead}{dead}{dead}
	\quivery{alive}
	\quiverz{dead}{dead}{alive}
	\machineexample{0}{-6.4}{4}{$(G \circ G \circ G \circ G)(c)$}	
	
	\quiverx{alive}{alive}{hide}{alive}{alive}{hide}
	\quivery{alive}
	\quiverz{hide}{hide}{alive}
	\spaceexample{7}{12}{0}
	
	\quiverx{hide}{hide}{alive}{hide}{hide}{alive}
	\quivery{hide}
	\quiverz{hide}{alive}{alive}
	\spaceexample{7}{7.4}{1}
	
	\quiverx{hide}{alive}{alive}{alive}{alive}{alive}
	\quivery{alive}
	\quiverz{alive}{hide}{hide}
	\spaceexample{7}{2.8}{2}
	
	\quiverx{alive}{hide}{hide}{hide}{hide}{hide}
	\quivery{hide}
	\quiverz{alive}{alive}{hide}
	\spaceexample{7}{-1.8}{3}
	
	\quiverx{alive}{alive}{hide}{hide}{hide}{hide}
	\quivery{alive}
	\quiverz{hide}{hide}{alive}
	\spaceexample{7}{-6.4}{4}	
\end{tikzpicture}
\caption{Orbit of $M$ and program space evolution over five time steps, using the rules described in Figure~\ref{fig:example1-rules}.}
\label{fig:example1-evolution}
\end{figure*}

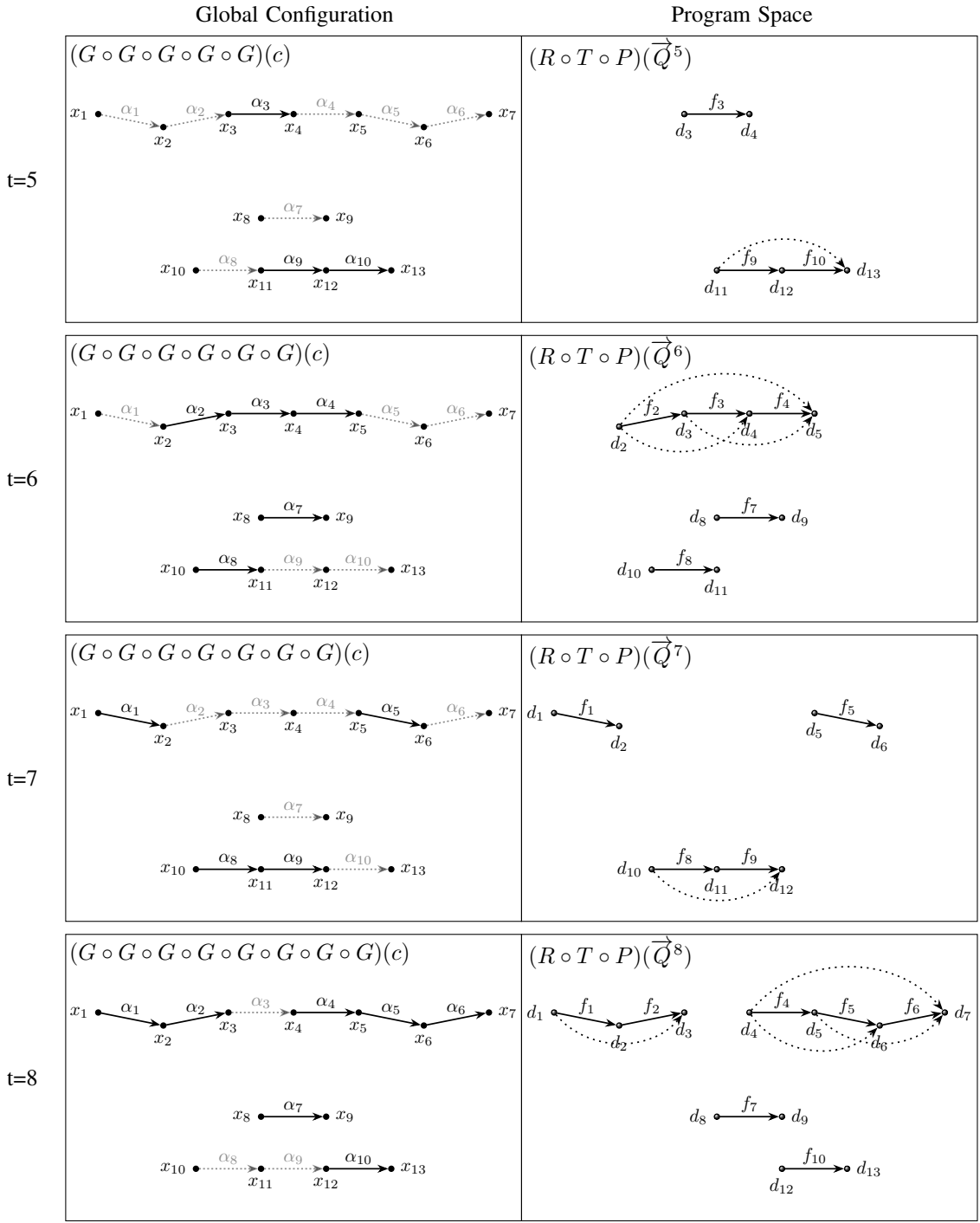
\begin{figure*}
\centering
\begin{tikzpicture}

	\node[text width=5cm] at (4,13.5) {Global Configuration};
	\node[text width=5cm] at (11.3,13.5) {Program Space};
	
	\quiverx{dead}{dead}{alive}{dead}{dead}{dead}
	\quivery{dead}
	\quiverz{dead}{alive}{alive}
	\machineexample{0}{12}{5}{$(G \circ G \circ G \circ G \circ G)(c)$}
	
	\quiverx{dead}{alive}{alive}{alive}{dead}{dead}
	\quivery{alive}
	\quiverz{alive}{dead}{dead}
	\machineexample{0}{7.4}{6}{$(G \circ G \circ G \circ G \circ G \circ G)(c)$}
	
	\quiverx{alive}{dead}{dead}{dead}{alive}{dead}
	\quivery{dead}
	\quiverz{alive}{alive}{dead}
	\machineexample{0}{2.8}{7}{$(G \circ G \circ G \circ G \circ G \circ G \circ G)(c)$}
	
	\quiverx{alive}{alive}{dead}{alive}{alive}{alive}
	\quivery{alive}
	\quiverz{dead}{dead}{alive}
	\machineexample{0}{-1.8}{8}{$(G \circ G \circ G \circ G \circ G \circ G \circ G \circ G)(c)$}
	
	\quiverx{hide}{hide}{alive}{hide}{hide}{hide}
	\quivery{hide}
	\quiverz{hide}{alive}{alive}
	\spaceexample{7}{12}{5}
	
	\quiverx{hide}{alive}{alive}{alive}{hide}{hide}
	\quivery{alive}
	\quiverz{alive}{hide}{hide}
	\spaceexample{7}{7.4}{6}
	
	\quiverx{alive}{hide}{hide}{hide}{alive}{hide}
	\quivery{hide}
	\quiverz{alive}{alive}{hide}
	\spaceexample{7}{2.8}{7}
	
	\quiverx{alive}{alive}{hide}{alive}{alive}{alive}
	\quivery{alive}
	\quiverz{hide}{hide}{alive}
	\spaceexample{7}{-1.8}{8}
\end{tikzpicture}
\caption{Repeating pattern using the initial configuration shown in Figure~\ref{fig:examples-initial} and the rules described in Figure~\ref{fig:example1-rules}.}
\label{fig:example1-repeating}
\end{figure*}

The orbit of $M$ from $t=0$ to $t=4$ is illustrated in Figure~\ref{fig:example1-evolution}, where it is clear that the isolated organism $\alpha_7$ alternates its state via the NOT rule. Accordingly, the computon $f_7$ emerges at even time steps and disappears at odd ones. 

Organisms with a binary neighbourhood evolve via the XOR rule. So, they are alive in the next time step if only one neighbour is alive in the current time step. For example, $\alpha_6$ becomes alive at $t=1$ because its left neighbour $\alpha_5$ is alive at $t=0$. But $\alpha_1$ is dead at $t=1$ since $c(\alpha_1)^0=c(\alpha_2)^0=1$. Consequently, we have $f_6 \in (R \circ T \circ P)(\overrightarrow{Q}^1)_1$ and $f_1 \notin (R \circ T \circ P)(\overrightarrow{Q}^1)_1$. More concretely, the application of the XOR rule on binary neighbourhoods from $t=0$ to $t=1$ results in the following state transitions:

\begin{equation*}
\begin{split}
c(\alpha_1)^1 = \delta_2(c(\alpha_1)^0, c(\alpha_2)^0) = 1 \oplus 1 = 0 \\
c(\alpha_8)^1 = \delta_2(c(\alpha_8)^0, c(\alpha_9)^0) = 0 \oplus 0 = 0 \\
c(\alpha_6)^1 = \delta_3(c(\alpha_5)^0, c(\alpha_6)^0) = 1 \oplus 0 = 1 \\
c(\alpha_{10})^1 = \delta_3(c(\alpha_9)^0, c(\alpha_{10})^0) = 0 \oplus 1 = 1 
\end{split}
\end{equation*}

Organisms with a ternary neighbourhood evolve according to \emph{Rule 54}. For instance, the organism $\alpha_9$ becomes alive at $t=1$ since $c(\alpha_8)^0=c(\alpha_9)^0=0$ and $c(\alpha_{10})^0=1$; thus, making available the computon $f_9 \in (R \circ T \circ P)(\overrightarrow{Q}^1)_1$. As the organism $\alpha_{10}$ is also alive at that moment in time, the composite computon $f_{10} \circ f_9 \in (R \circ T \circ P)(\overrightarrow{Q}^1)_1$ emerges. In general, the application of \emph{Rule 54} on ternary neighbourhoods from $t=0$ to $t=1$ is given by:

\begin{equation*}
\begin{split}
c(\alpha_2)^1 = \delta_4(c(\alpha_1)^0, c(\alpha_2)^0, c(\alpha_3)^0) = \delta_4(1,1,0) = 0 \\
c(\alpha_3)^1 = \delta_4(c(\alpha_2)^0, c(\alpha_3)^0, c(\alpha_4)^0) = \delta_4(1,0,1) = 1 \\
c(\alpha_4)^1 = \delta_4(c(\alpha_3)^0, c(\alpha_4)^0, c(\alpha_5)^0) = \delta_4(0,1,1) = 0 \\
c(\alpha_5)^1 = \delta_4(c(\alpha_4)^0, c(\alpha_5)^0, c(\alpha_6)^0) = \delta_4(1,1,0) = 0 \\
c(\alpha_9)^1 = \delta_4(c(\alpha_8)^0, c(\alpha_9)^0, c(\alpha_{10})^0) = \delta_4(0,0,1) = 1
\end{split}
\end{equation*}

Figure~\ref{fig:example1-evolution} shows the orbit of $M$ over five time steps, where we can see that the most complex space is formed at $t=2$. This is because all the organisms in the path $(\alpha_6,\ldots,\alpha_2)$ become alive. Particularly, $f_6 \circ f_5 \circ f_4 \circ f_3 \circ f_2 \in (R \circ T \circ P)(\overrightarrow{Q}^2)_1$ is the most complex composite computon (i.e., the largest sequential program in our example).

To further explore the emergence of program spaces, we implemented our composition machine $M$.\footnote{The source code of our implementation is publicly available at \\ \url{https://github.com/damianarellanes/compositionmachine}.} After executing this example over $10000$ time steps, we found that there is a repeating pattern every four time steps, starting from $t=5$. This means that the compositions formed before $t=5$ are unique and, after that, there are only four different program spaces. Such a repeating pattern is shown in Figure~\ref{fig:example1-repeating}.

\subsection{Combining the constant operator, the OR operator and Rule 122 for the emergence of sequential program spaces}

In this example, the orbit of $M$ is given by the synchronous application of the rules illustrated in Figure~\ref{fig:example2-rules}, starting from the initial configuration $c$ shown in Figure~\ref{fig:examples-initial}. Here, $\delta_1$ is a constant function, $\delta_2$ and $\delta_3$ are OR operations, and $\delta_4$ corresponds to \emph{Rule 122}.

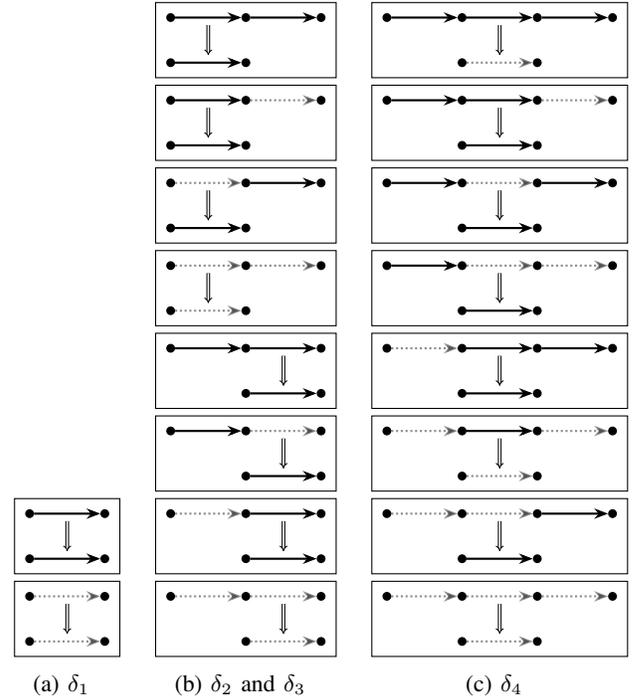
\begin{figure}[!h]
\centering
\subcaptionbox{$\delta_1$ \label{fig:example2-rules-constant}}
{
	\begin{tikzpicture}
		\machineruleunary{0}{1.9}{alive}{alive}
		\machineruleunary{0}{0.8}{dead}{dead}
	\end{tikzpicture}
}
\subcaptionbox{$\delta_2$ and $\delta_3$ \label{fig:example2-rules-or}}
{
	\begin{tikzpicture}
		\machinerulebinary{0}{8.5}{alive}{alive}{alive}{left}
		\machinerulebinary{0}{7.4}{alive}{dead}{alive}{left}	
		\machinerulebinary{0}{6.3}{dead}{alive}{alive}{left}
		\machinerulebinary{0}{5.2}{dead}{dead}{dead}{left}
	
		\machinerulebinary{0}{4.1}{alive}{alive}{alive}{right}
		\machinerulebinary{0}{3}{alive}{dead}{alive}{right}
		\machinerulebinary{0}{1.9}{dead}{alive}{alive}{right}
		\machinerulebinary{0}{0.8}{dead}{dead}{dead}{right}
	\end{tikzpicture}
}
\subcaptionbox{$\delta_4$ \label{fig:example2-rules-122}}
{
	\begin{tikzpicture}
		\machineruleternary{0}{8.5}{alive}{alive}{alive}{dead}	
		\machineruleternary{0}{7.4}{alive}{alive}{dead}{alive}
		\machineruleternary{0}{6.3}{alive}{dead}{alive}{alive}
		\machineruleternary{0}{5.2}{alive}{dead}{dead}{alive}
		\machineruleternary{0}{4.1}{dead}{alive}{alive}{alive}
		\machineruleternary{0}{3}{dead}{alive}{dead}{dead}
		\machineruleternary{0}{1.9}{dead}{dead}{alive}{alive}
		\machineruleternary{0}{0.8}{dead}{dead}{dead}{dead}
	\end{tikzpicture}
}
\caption{Unary neighbourhoods are associated with the constant operator, binary neighbourhoods with the OR operator and ternary ones with \emph{Rule 122}.}
\label{fig:example2-rules}
\end{figure}

\begin{figure*}
\centering
\begin{tikzpicture}

	\node[text width=5cm] at (4,13.5) {Global Configuration};
	\node[text width=5cm] at (11.3,13.5) {Program Space};	
	
	\quiverx{alive}{alive}{dead}{alive}{alive}{dead}
	\quivery{alive}
	\quiverz{dead}{dead}{alive}
	\machineexample{0}{12}{0}{$c$}
	
	\quiverx{alive}{alive}{alive}{alive}{alive}{alive}
	\quivery{alive}
	\quiverz{dead}{alive}{alive}
	\machineexample{0}{7.4}{1}{$G(c)$}
	
	\quiverx{alive}{dead}{dead}{dead}{dead}{alive}
	\quivery{alive}
	\quiverz{alive}{alive}{alive}
	\machineexample{0}{2.8}{2}{$(G \circ G)(c)$}
	
	\quiverx{alive}{alive}{dead}{dead}{alive}{alive}
	\quivery{alive}
	\quiverz{alive}{dead}{alive}
	\machineexample{0}{-1.8}{3}{$(G \circ G \circ G)(c)$}
	
	\quiverx{alive}{alive}{alive}{alive}{alive}{alive}
	\quivery{alive}
	\quiverz{alive}{alive}{alive}
	\machineexample{0}{-6.4}{4}{$(G \circ G \circ G \circ G)(c)$}	
	
	\quiverx{alive}{alive}{hide}{alive}{alive}{hide}
	\quivery{alive}
	\quiverz{hide}{hide}{alive}
	\spaceexample{7}{12}{0}
	
	\quiverx{alive}{alive}{alive}{alive}{alive}{alive}
	\quivery{alive}
	\quiverz{hide}{alive}{alive}
	\spaceexample{7}{7.4}{1}
	
	\quiverx{alive}{hide}{hide}{hide}{hide}{alive}
	\quivery{alive}
	\quiverz{alive}{alive}{alive}
	\spaceexample{7}{2.8}{2}
	
	\quiverx{alive}{alive}{hide}{hide}{alive}{alive}
	\quivery{alive}
	\quiverz{alive}{hide}{alive}
	\spaceexample{7}{-1.8}{3}
	
	\quiverx{alive}{alive}{alive}{alive}{alive}{alive}
	\quivery{alive}
	\quiverz{alive}{alive}{alive}
	\spaceexample{7}{-6.4}{4}	
\end{tikzpicture}
\end{figure*}

\begin{figure*}
\centering
\begin{tikzpicture}	
	
	\quiverx{alive}{dead}{dead}{dead}{dead}{alive}
	\quivery{alive}
	\quiverz{alive}{dead}{alive}
	\machineexample{0}{12}{5}{$(G \circ G \circ G \circ G \circ G)(c)$}
	
	\quiverx{alive}{alive}{dead}{dead}{alive}{alive}
	\quivery{alive}
	\quiverz{alive}{alive}{alive}
	\machineexample{0}{7.4}{6}{$(G \circ G \circ G \circ G \circ G \circ G)(c)$}
	
	\quiverx{alive}{alive}{alive}{alive}{alive}{alive}
	\quivery{alive}
	\quiverz{alive}{dead}{alive}
	\machineexample{0}{2.8}{7}{$(G \circ G \circ G \circ G \circ G \circ G \circ G)(c)$}
	
	\quiverx{alive}{dead}{dead}{dead}{dead}{alive}
	\quivery{alive}
	\quiverz{alive}{alive}{alive}
	\machineexample{0}{-1.8}{8}{$(G \circ G \circ G \circ G \circ G \circ G \circ G \circ G)(c)$}
	
	\quiverx{alive}{alive}{dead}{dead}{alive}{alive}
	\quivery{alive}
	\quiverz{alive}{dead}{alive}
	\machineexample{0}{-6.4}{9}{$(G \circ G \circ G \circ G \circ G \circ G \circ G \circ G \circ G)(c)$}	
	
	\quiverx{alive}{hide}{hide}{hide}{hide}{alive}
	\quivery{alive}
	\quiverz{alive}{hide}{alive}
	\spaceexample{7}{12}{5}
	
	\quiverx{alive}{alive}{hide}{hide}{alive}{alive}
	\quivery{alive}
	\quiverz{alive}{alive}{alive}
	\spaceexample{7}{7.4}{6}
	
	\quiverx{alive}{alive}{alive}{alive}{alive}{alive}
	\quivery{alive}
	\quiverz{alive}{hide}{alive}
	\spaceexample{7}{2.8}{7}
	
	\quiverx{alive}{hide}{hide}{hide}{hide}{alive}
	\quivery{alive}
	\quiverz{alive}{alive}{alive}
	\spaceexample{7}{-1.8}{8}
	
	\quiverx{alive}{alive}{hide}{hide}{alive}{alive}
	\quivery{alive}
	\quiverz{alive}{hide}{alive}
	\spaceexample{7}{-6.4}{9}
\end{tikzpicture}
\caption{Orbit of $M$ and program space evolution over ten time steps, using the rules described in Figure~\ref{fig:example2-rules}.}
\label{fig:example2-evolution}
\end{figure*}

The orbit of $M$ from $t=0$ to $t=9$ is illustrated in Figure~\ref{fig:example2-evolution}, where it is clear that the isolated organism $\alpha_7$ remains alive at every moment. Accordingly, the computon $f_7$ consistently appears in all the program spaces. 

Organisms with a binary neighbourhood evolve via the OR rule. So, they are alive in the next time step if at least one neighbour is alive in the current time step. For example, $\alpha_6$ becomes alive at $t=1$ because its left neighbour $\alpha_5$ is alive at $t=0$. But $\alpha_8$ remains dead at $t=1$ since $c(\alpha_8)^0=c(\alpha_9)^0=0$. Consequently, we have $f_6 \in (R \circ T \circ P)(\overrightarrow{Q}^1)_1$ and $f_8 \notin (R \circ T \circ P)(\overrightarrow{Q}^1)_1$. More concretely, the application of the OR rule on binary neighbourhoods from $t=0$ to $t=1$ results in the following state transitions:

\begin{equation*}
\begin{split}
c(\alpha_1)^1 = \delta_2(c(\alpha_1)^0, c(\alpha_2)^0) = 1 \lor 1 = 1 \\
c(\alpha_8)^1 = \delta_2(c(\alpha_8)^0, c(\alpha_9)^0) = 0 \lor 0 = 0 \\
c(\alpha_6)^1 = \delta_3(c(\alpha_5)^0, c(\alpha_6)^0) = 1 \lor 0 = 1 \\
c(\alpha_{10})^1 = \delta_3(c(\alpha_9)^0, c(\alpha_{10})^0) = 0 \lor 1 = 1 
\end{split}
\end{equation*}

Organisms with a ternary neighbourhood evolve according to \emph{Rule 122}. For instance, the organism $\alpha_3$ becomes alive at $t=1$ since $c(\alpha_2)^0=c(\alpha_4)^0=1$ and $c(\alpha_3)^0=0$; thus, making available the computon $f_3 \in (R \circ T \circ P)(\overrightarrow{Q}^1)_1$. As all the organisms in the path $(\alpha_6,\ldots,\alpha_2,\alpha_1)$ are alive at $t=1$, the space $(R \circ T \circ P)(\overrightarrow{Q}^1)$ defines all the possible compositions that can be formed with the computons $f_1,\ldots,f_6$. In general, the application of \emph{Rule 122} on ternary neighbourhoods from $t=0$ to $t=1$ is given by:

\begin{equation*}
\begin{split}
c(\alpha_2)^1 = \delta_4(c(\alpha_1)^0, c(\alpha_2)^0, c(\alpha_3)^0) = \delta_4(1,1,0) = 1 \\
c(\alpha_3)^1 = \delta_4(c(\alpha_2)^0, c(\alpha_3)^0, c(\alpha_4)^0) = \delta_4(1,0,1) = 1 \\
c(\alpha_4)^1 = \delta_4(c(\alpha_3)^0, c(\alpha_4)^0, c(\alpha_5)^0) = \delta_4(0,1,1) = 1 \\
c(\alpha_5)^1 = \delta_4(c(\alpha_4)^0, c(\alpha_5)^0, c(\alpha_6)^0) = \delta_4(1,1,0) = 1 \\
c(\alpha_9)^1 = \delta_4(c(\alpha_8)^0, c(\alpha_9)^0, c(\alpha_{10})^0) = \delta_4(0,0,1) = 1
\end{split}
\end{equation*}

Figure~\ref{fig:example2-evolution} shows the orbit of $M$ over ten time steps, where we can see that the most complex space is formed at $t=4$. This emergent space is indeed $(R \circ T \circ P)(Q)$. With the help of our implemented tool, we executed our example over $10000$ time steps and we found that the maximal program space $(R \circ T \circ P)(Q)$ is part of a repeating pattern whose period corresponds to six time steps (see Figure~\ref{fig:example2-evolution} from $t=2$ to $t=7$). Particularly, the compositions emerging before $t=2$ are unique and, after that, there are only six different program spaces. 

\section{Conclusions}
\label{sec:conclusions}

In this paper, we introduced the notion of composition machines for the emergence of sequential program spaces via self-organisation rules. A composition machine evolves a quiver in discrete-time and its global state is semantically equivalent to a category that defines a space of all the possible compositions of computons at some moment in time. To demonstrate its operation, we presented two examples in which complex sequential program spaces emerge from simple rules. 

As it is not trivial to engineer emergence, we plan to investigate novel mechanisms for defining rules that match contextual intention. Also, we plan to study alternative neighbourhoods to allow the evolution of cyclic quivers and, thus, the emergence of infinite categories. In this vein, we would like to explore different program space patterns through the synchronous application of varied local transition rules. In any case, Category Theory will be the \emph{de facto} reasoning framework.

We believe that Category Theory will play a fundamental role in the understanding of self-organisation and, in particular, in the study of self-organising software models. This is because such a theory follows an emergentist view rather than a reductionist one. So, the main focus of Category Theory is not to study isolated abstract objects but the relevant interactions between them and their composition. It is indeed a formal framework to reasoning about compositionality and self-organisation. 

This paper brings together these worlds in the form of an abstract machine that allows the definition and the execution of self-organising software models. The aim of this machine is not to compute single programs, but to build programs (i.e., composite computons) from other programs (i.e., predefined computons) via self-organisation rules. Under this umbrella, we envision that in the future, instead of manually coding them, complex software models will grow from simple rules (just as biological organisms do).

\bibliographystyle{IEEEtran}
\bibliography{IEEEabrv,refs}

\end{document}